\newcommand{\R}{\mathbbm{R}}
\newcommand{\Z}{\mathcal{Z}}
\renewcommand{\P}{\mathbbm{P}}
\newcommand{\Pc}{\mathcal{P}}
\newcommand{\N}{\mathbbm{N}}
\newcommand{\F}{\mathcal{F}}
\newcommand{\B}{\mathcal{B}}
\newcommand{\V}{\mathcal{V}}
\newcommand{\E}{\mathcal{E}}
\newcommand{\A}{\mathcal{A}}
\newcommand{\norm}[2]{ \left \Vert #2 \right\Vert_{\mathcal{Z}_{#1}}}
\DeclareMathOperator{\eqlaw}{\stackrel{\mathcal{L}}{=}} 
\newcommand{\ind}[1]{\mathbbm{1}_{#1}}
\newcommand{\Exp}{\mathbbm{E}} % Expectation
\newcommand{\Ex}[1]{\mathbbm{E}\left [#1 \right]} % Expectation
\newcommand{\M}{\mathcal{M}}
\newcommand{\Ct}{\mathcal{C}_{\tau}}
\journalname{Journal of Statistical Physics}
\begin{document}

\title{Spatially extended networks with singular multi-scale connectivity patterns\thanks{INRIA BANG Laboratory and the Mathematical Neuroscience Lab, CIRB-Collège de France, CNRS UMR 7241m INSERM U1050, Universit\'e Pierre et Marie Curie ED 158. MEMOLIFE Laboratory of excellence and Paris Sciences Lettres PSL*.}
}

\author{Jonathan Touboul}

\institute{ The Mathematical Neuroscience Laboratory \\
			Coll\`ege de France / CIRB and INRIA Bang Laboratory\\
			11, place Marcelin Berthelot,
			75005 Paris, France
            Tel.: +33-144271388\\
            \email{jonathan.touboul@college-de-france.fr}           %  \\
}

\date{Received: date / Accepted: date}
% The correct dates will be entered by the editor

\maketitle

\begin{abstract}
The cortex is a very large network characterized by a complex connectivity including at least two scales: a microscopic scale at which the interconnections are non-specific and very dense, while macroscopic connectivity patterns connecting different regions of the brain at larger scale are extremely sparse. This motivates to analyze the behavior of networks with multiscale coupling, in which a neuron is connected to its $v(N)$ nearest-neighbors where $v(N)=o(N)$, and in which the probability of macroscopic connection between two neurons vanishes. These are called singular multi-scale connectivity patterns. We introduce a class of such networks and derive their continuum limit. We show convergence in law and propagation of chaos in the thermodynamic limit. The limit equation obtained is an intricate non-local McKean-Vlasov equation with delays which is universal with respect to the type of micro-circuits and macro-circuits involved. 
\keywords{Mean-field limits \and Spatially-extended networks \and Mean-field equations \and Neural Fields}
\PACS{87.19.ll \and % Neural networks in neuroscience,
87.19.lc \and % Noise in Nervous systems
87.18.Sn \and %	Neural networks and synaptic communication
}
\end{abstract}

\bigskip

\hrule
\setcounter{tocdepth}{3}

\tableofcontents

\medskip

\hrule

\medskip

The purpose of this paper is to provide a general convergence and propagation of chaos result for large, spatially extended networks of coupled diffusions with multi-scale disordered connectivity. Such networks arise in the analysis of neuronal networks of the brain. Indeed, the brain cortical tissue is a large, spatially extended network whose dynamics is the result of a complex interplay of different cells, in particular neurons, electrical cells with stochastic behaviors. In the cortex, neurons interact depending on their anatomical locations and on the feature they code for. The neuronal tissue of the brain constitute spatially-extended structures presenting complex structures with local, dense and non-specific interactions (microcircuits) and long-distance lateral connectivity that are function-specific. In other words, a given cell in the cortex sends its projections at (i) a local scale: the neurons connect extensively to anatomically close cells (the \emph{microcircuits}), forming a dense local network, and (ii) superimposed to this local architecture, a very sparse functional architecture arises, in which long-range connections are made with other cells that are anatomically more remote but that respond to the same stimulus (the functional \emph{macrocircuit}). This canonical architecture was first evidenced by electrophysiological recordings in the 70's~\cite{hubel-wiesel:77,mountcastle:57} and made more precise as experimental techniques developed (see~\cite{bosking-zhang-etal:97} for striking representations of this architecture in the striate cortex). The primary visual cortex  of certain mammals is a paradigmatic well documented cortical area in which this architecture was evidenced. In such cortical areas, neurons organize into columns of small spatial extension containing a large number of cells (on the order of tens of thousands cells) responding preferentially to specific orientations in visual stimuli~\cite{hubel-wiesel-etal:78}, constituting local microcircuits that distribute across the cortex in a continuous map, each cell connecting densely with its nearest neighbors and sparsely with remote cells coding for the same stimulus~\cite{bosking-zhang-etal:97}. These spatially extended networks are called \emph{neural fields}.

Such organizations and structures are deemed to subtend processing of complex sensory or cortical information and support brain functions~\cite{kandel-schwartz-etal:00}. In particular, the activity of these neuronal assemblies produce a mesoscopic, spatially extended signal, which is precisely at the spatial resolution of the most prominent imaging techniques (EEG, MEG, MRI). These recordings are good indicators of brain activity: they are a central diagnostic tool used by physicians to assert function or disfunction. 

In these spatially extended systems, the presence of delays in the communication of cells, chiefly due to the transport of information through axons and to the typical time the synaptic machinery needs to transmit it, is essential to the dynamics. These transmission delays will chiefly affect the long connections of the macrocircuit, which are orders of magnitude longer than those of the microcircuit. 

The mathematical and computational analysis of the dynamics of neural fields relies almost exclusively on the use of heuristic models since the seminal work of Wilson, Cowan and Amari \cite{amari:72,wilson-cowan:73}. These propose to describe the mesoscopic cortical activity through a deterministic, scalar variable whose dynamics is given by integro-differential equations. This model was widely studied analytically and numerically, and successfully accounted for hallucination patterns, binocular rivalry and synchronization~\cite{laing-troy-etal:02,ermentrout-cowan:79}. Justifying these models starting from biologically realistic settings has since then been a great endeavor~\cite{bressloff:12}. 

This problem was undertaken recently using probabilistic methods. The first contribution~\cite{touboulNeuralfields:11} introduced an approximation of the underlying connectivity of the neural network involved, considering a fully connected architecture (each neuron was connected to all the others) and neurons in the same column were considered to be precisely at the same spatial location. They showed propagation of chaos and convergence to some intricate McKean-Vlasov equation. More recently, an heterogeneous macrocircuit model was analyzed in~\cite{stannat-lucon:13}. In that paper, the authors considered a network with heterogeneous and non-global connectivity: neurons were connected with their $P$-nearest neighbors, where $P=cN$ with $c<1$, or with power-law synaptic weights, and obtained a limit theorem for the behavior of the empirical density. In both cases, the connectivity was considered at a single scale, and did not reproduce the actual type of connectivity pattern observed in the brain. 

In the present manuscript, we come back to these models with a more plausible architecture including local microcircuit together with non-local macroscopic sparse connectivity. Using statistical methods and in particular an extension of the coupling method~\cite{mckean:66,dobrushin:70,sznitman:89}, we will demonstrate the propagation of chaos property, and convergence towards a complex nonlinear Markov equation similar to the classical \emph{McKean-Vlasov} equations, but with a non-local integral over space locations and delays. Interestingly, this object presents substantial differences with the usual McKean-Vlasov limits: beyond the presence of delays, the neural field limit regime is at a mesoscopic scale where averaging effects locally to occur, but is fine enough to resolve brain's structure and its activity, resulting in the presence of an integral term over space. The solution, seen as a function of space, is everywhere discontinuous, which makes the limiting object highly singular. The present work {is distinct of that of~\cite{stannat-lucon:13} in that we consider local connectivity patterns in which neurons connect to a negligible portion of the neurons}. This  includes non-trivial issues, and necessitate to thoroughly control the regularity of the law of the solution as a function of space. On the other hand, beyond the presence of random locations of individual neurons and the presence of a dense microcircuit, the sparse macro-circuit generalizes non-trivially the work done in~\cite{touboulNeuralfields:11}. Indeed, at the macro-circuit scale, the probability of connecting two fixed neurons tends to zero. We therefore need to deal with a non-globally connected network, and address the problem by using fine estimates on the interaction terms and Chernoff-Hoeffding theorem~\cite{hoeffding1963probability}. 

The speed of convergence towards the mean-field equations is quantified and involves three terms, one governing the local averaging effects arising from the micro-circuits, one arising from the regularity properties of the solutions, and one corresponding to the speed of convergence of the macro-circuit interaction term towards a continuous limit. In the neural field regime, the limit equations are very singular, in particular trajectories are not measurable with respect to the space. These limits are very hard to analyze at this level of generality. However, in the type of models usually considered in the study of neural fields, namely the firing-rate model, it was shown in~\cite{touboulNeuralFieldsDynamics:11} that the behavior can be rigorously and exactly reduced to a system of deterministic integro-differential equations that are compatible with the usual Wilson and Cowan system in the zero noise limit. Noise intervenes in these equations a nonlinear fashion, fundamentally shaping in the macroscopic dynamics. 

	The paper is organized as follows. We start in section~\ref{sec:model} by introducing precisely our model and proving a few simple results on the network equations and on the topology of the micro-circuit. This being shown, we will turn in section~\ref{sec:ExistenceUniquenessSpace} to the analysis of the network equations, and will in particular make sense of the intricate non-local McKean-Vlasov equation, show well-posedness and some regularity estimates on the law of the mean-field equations. Section~\ref{sec:PropaChaSpace} will be devoted to the demonstration of the convergence of the network equations towards the mean-field equations. 
	
	\section{Model and Network Equations}\label{sec:model}
	
We consider a piece of cortex $\Gamma$ (the \emph{neural field}), which is a regular compact subset\footnote{It is generally chosen to be a smooth open set such as a rectangle in $\R^2$ when representing locations on the cortex, or periodic domains such as the torus of dimension 1 $\mathbbm{S}^1$ in the case of the representation of the visual field, in which neurons code for a specific orientation in the visual stimulus: in that model, $\Gamma$ is considered to be the feature space~\cite{bressloff-cowan-etal:01,ermentrout-cowan:79}.} of $\R^d$ for some $d\in \N^*$, and the density of neurons on $\Gamma$ is given by a probability measure $\lambda\in \M^{1}(\Gamma)$ assumed to be absolutely continuous with respect to Lebesgue's measure $dr$ on $\Gamma$, with strictly positive and bounded density $d\lambda/dr \in [\lambda_{min},\lambda_{max}]$.
	
	 On $\Gamma$, we consider a spatially extended network composed of $N$ neurons at random locations $r_i\in\Gamma$ drawn independently with law $\lambda$ in a probability space $(\Omega',\F',\Pc)$, and we will denote by $\E$ the expectation with respect to this probability space. A given neuron $i\in\{1,\cdots,N\}$ projects local connections in its neighborhood $\V(i)$, and long-range connections over the whole neural field. We will consider here that the local microcircuit connectivity consists of a fully connected graph with $v(N)=o(N)$ nearest-neighbors\footnote{Random connections can be introduced and handled with the same techniques as done for the macrocircuit. }. The synaptic weights corresponding to these connections are assumed equal to 
	\[\bar{J}/v(N),\] 
	where $\bar{J}\in\R$ (it is generally positive since local interactions in the cortex tend to be excitatory). A central example is the case $v(N)=cN^{\alpha}$ with $\alpha<1$. 
	\begin{remark}
		With zero probability, it may occur for a given neuron $i$ that its local microcircuit $\V(i)$ is not well defined. This occurs if there exists $R>0$ such that the number of neurons at distance strictly smaller than $R$ of neuron $i$, denoted $v_R(i,N)$, is strictly smaller than $v(N)$ and the number of neurons at a distance smaller or equal to $R$ is strictly larger than $v(N)$, meaning in particular that there exists several neurons at distance precisely $R$. This event has a null probability, $\V(i)$ is defined as the union of all neurons at distance strictly smaller than $R$, completed by $v(N)-v_R(i,N)$ neurons randomly chosen among those at distance exactly $R$ of neuron $i$. 
	\end{remark}
	The neurons also send non-local connections which are specific (i.e. depend on the type of neurons, indexed here by the spatial location), which are much sparser than the local microcircuit. We will consider that the macro-connections are random variables $M_{ij}$ drawn in $(\Omega',\F',\Pc)$ and frozen during the evolution of the network, with law:
	\[M_{ij}=\frac{J(r_i,r_j)}{N\beta(N)} \chi_{ij}\]
	where $\chi_{ij}$ is a Bernoulli random variable with parameter $\beta(N)$
	\[\chi_{ij}=\begin{cases}
		 1 & \text{with probability } \beta(N)\\
		 0 & \text{with probability } 1-\beta(N).
	\end{cases}\]
	The coefficient $J(r_i,r_j)$ governs the connectivity weight between neurons at location $r_i$ and $r_j$. For instance, in the visual cortex, if the neurons of the cortical column at location $r_i$ codes for the collinear (resp, orthogonal) orientation as neurons in the column at $r_j$, $J(r_i,r_j)$ is positive (negative). These coefficients are assumed to be smooth (see assumption~\ref{Assump:SpaceContinuity}) and bounded, and we denote:
	\[\Vert J\Vert_{\infty}=\sup_{(r,r')\in\Gamma^2}\vert J(r,r')\vert. \]
	The scaling coefficient $\beta(N)v(N)$ corresponds to the total incoming connections from the microcircuit related to neuron $j$. The parameter $\beta(N)$ accounts for the connectivity level of the macrocircuit. In particular, if populations are not connected, we will set $J(r_i,r_j)=0$. In that sense, the function $\beta(N)$ does not account for all absent links in the network, but rather for the sparsity of the macro-circuit. Motivated by the fact that the macro-circuit is very sparse and that micro-circuits form non-trivial patches of connectivity, we will assume that, when $N\to \infty$,
	\[\begin{cases}
		\beta(N) \to 0\\
		\beta(N)v(N) \to \infty.
	\end{cases}\]
	The hypothesis on the connectivity ensure the following facts, desirable for a modeling at the neural field scale (see Fig.~\ref{fig:Neurons}):
	\begin{itemize}
		\item the local micro-circuit shrinks to a single point in the limit $N\to \infty$ (see lemma~\ref{lem:SizeMicro}), and
		\item the macro-circuit is sparse at the level of single cells ($\beta(N)\to 0$), but non-sparse at the level of cortical columns ($\beta(N)v(N)\to\infty$). 
	\end{itemize}
	Note that in all our developments, one only needs the assumption that $N\beta(N)\to\infty$ as $N\to\infty$. This is of course a consequence of our current assumption. 
	
	\begin{remark} \ 
A schematic topology usually considered could be the 2-dimensional regular lattice 
\[\Gamma^N=\left\{\left(\frac j N ; \frac i N\right) ; 1\leq i,j \leq N\right\}\]
approximating the unit square $\Gamma=[0,1]^2$ with $N^2$ points. In this model, typical micro-circuit size could be chosen to be $cN^{\alpha}$ with $\alpha<1$, and $\beta(N)$ of order $N^{-\gamma}$ with $0<\gamma<\alpha$. Our model takes into account the fact that in reality, neurons are not regularly placed on the cortex, and therefore such a regular lattice case is extremely unlikely to arise (this architecture has probability zero). Moreover, in contrast with this more artificial example, the probability distribution of the location of one given neuron do not depend on the network size. In our setting, $\lambda$ accounts for the density of neurons on the cortex, and as the network size is increased, new neurons are added on the neural field at locations independent of that of other neurons, with the same probability $\lambda$, so that neuron locations sample the asymptotic cell density.   
	\end{remark}
	
	These elements describe the random topology of the network. Prior to the evolution, a number of neurons $N$ and a configuration $\A_N$ is drawn in the probability space $(\Omega',\F',\Pc)$. The configuration of the network provides:
	\begin{itemize}
		\item The locations of the neurons $(r_i)_{i=1\cdots N}$ i.i.d. with law $\lambda$
		\item The connectivity weights, in particular the values of the i.i.d. Bernoulli variables $(\chi_{ij})_{i,j\in 1\cdots N}$ of parameter $\beta(N)$.
	\end{itemize}

	\begin{figure}[htbp]
		\centering
			\includegraphics[width=.5\textwidth]{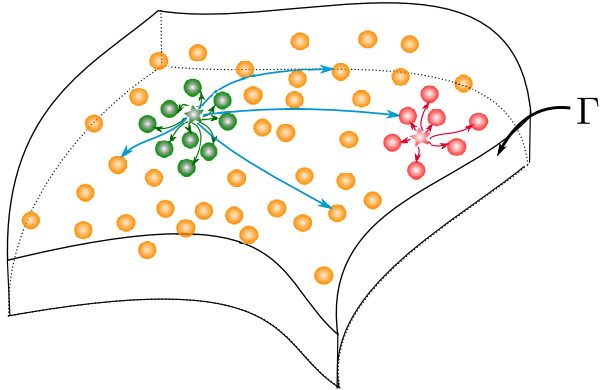}
		\caption{A typical multi-scale architecture of neural fields. Green (red) circles correspond to neurons belonging to the local microcircuits of the green (red) star neuron, and blue arrows correspond to macro-connections of the green star neuron. }
		\label{fig:Neurons}
	\end{figure}
		
	Let us start by analyzing the topology of the micro-circuit. At the macroscopic scale, we expect local micro-circuits to shrink to a single point in space, which would precisely correspond to the scale at which imaging techniques record the activity of the brain (a pixel in the image). The micro-circuit connects a neuron to its $v(N)$ nearest neighbors. We made the assumptions that $v(N)$ tends to infinity as $N\to \infty$ while keeping $v(N)=o(N)$. This property ensures that for a fixed neuron $i\in\{1\cdots N\}$ and for any $r_{j}\in \V(i)$, the distance\footnote{In all the manuscript, we will denote, for any $p\in\N^*$ and $x\in \R^p$, $\vert x\vert$ the Euclidean norm of $x$, regardless of the space involved and the dimension $p$ considered.} $d_{ij}=\vert r_j - r_i\vert$ is, with overwhelming probability, upperbounded by a constant multiplied by $(v(N)/N)^{1/d}$. In the regular lattice case, this property is trivial. In our random setting, we introduce the maximal distance between two neurons in the microcircuit associated to neuron $i$ is noted:
	\[d_m(i)=\max_{j\in\V(i)} d_{ij}.\] 
	This quantity has a law that is independent of the specific neuron $i$ chosen. 
	\begin{proposition}\label{lem:SizeMicro}
		The microcircuit shrinks to a single point in space as $N$ to infinity. More precisely, for any $i\in\{1,\cdots,N\}$, the maximal distance $d^N_m(i)$ between two neurons in the microcircuit associated to neuron $i$ decreases towards $0$, in the sense that there exists $C>0$ such that the maximal distance between two points in a microcircuit satisfies the inequality:
		 \[\max_{i\in\{1\cdots N\}}\mathcal{E} [d^N_m(i)] \leq C \left(\left(\frac{v(N)}{N}\right)^{\frac 1 d} + \frac{1}{v(N)}\right).\]
	\end{proposition}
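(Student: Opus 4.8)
The plan is to exploit the exchangeability of the neuron locations and reduce the estimate to a binomial concentration (occupancy) argument for a single neuron. Since the $(r_j)_{j=1,\dots,N}$ are i.i.d.\ with law $\lambda$, the law of $d_m^N(i)$ does not depend on $i$, so it suffices to bound $\E\big[d_m^N(1)\mid r_1=r\big]$ uniformly in $r\in\Gamma$; conditionally on $r_1=r$ the remaining locations $(r_j)_{j\neq 1}$ are still i.i.d.\ with law $\lambda$. For $\rho>0$ write $N_\rho(r)=\#\{j\neq 1:\ \vert r_j-r\vert\le\rho\}$, a $\mathrm{Binomial}\big(N-1,p_r(\rho)\big)$ variable with $p_r(\rho)=\lambda\big(B(r,\rho)\cap\Gamma\big)$. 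Because ties occur with probability zero, $\V(1)$ is a.s.\ exactly the set of the $v(N)$ nearest neighbours of neuron $1$, and therefore $\{d_m^N(1)>\rho\}\subseteq\{N_\rho(r)<v(N)\}$ almost surely.

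Next I would use the regularity of $\Gamma$ together with the lower bound $d\lambda/dr\ge\lambda_{min}$ to obtain a constant $a>0$ (depending only on $d$, $\lambda_{min}$ and $\Gamma$) and a radius $\rho_0>0$ such that $p_r(\rho)\ge a\,\rho^d$ for every $r\in\Gamma$ and every $\rho\le\rho_0$. This is the only place where the ``regular compact'' hypothesis on $\Gamma$ is genuinely needed: it guarantees that a small ball centred at any point of $\Gamma$, even on the boundary, captures at least a fixed positive fraction of its Lebesgue volume (an interior cone / corkscrew type condition). I would then set
\[\rho^\star=\Big(\frac{2\,v(N)}{a\,(N-1)}\Big)^{1/d},\]
which, since $v(N)=o(N)$, is $\le\rho_0$ for $N$ large and satisfies $\rho^\star\le C_1\,(v(N)/N)^{1/d}$ for a universal constant $C_1$; with this choice $\E\big[N_{\rho^\star}(r)\big]=(N-1)\,p_r(\rho^\star)\ge 2\,v(N)$.

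The core estimate is then a multiplicative Chernoff--Hoeffding bound \cite{hoeffding1963probability} applied to the binomial $N_{\rho^\star}(r)$: since $v(N)\le\frac12\E[N_{\rho^\star}(r)]$,
\[\Pc\big(N_{\rho^\star}(r)<v(N)\big)\le\Pc\Big(N_{\rho^\star}(r)\le\tfrac12\E[N_{\rho^\star}(r)]\Big)\le e^{-\E[N_{\rho^\star}(r)]/8}\le e^{-v(N)/4}.\]
Using that $d_m^N(1)\le\mathrm{diam}(\Gamma)<\infty$ almost surely, I would split
\[\E\big[d_m^N(1)\mid r_1=r\big]\le\rho^\star+\mathrm{diam}(\Gamma)\,\Pc\big(d_m^N(1)>\rho^\star\mid r_1=r\big)\le\rho^\star+\mathrm{diam}(\Gamma)\,e^{-v(N)/4}.\]
Finally, $e^{-x/4}\le 4/x$ for $x>0$ gives $e^{-v(N)/4}\le 4/v(N)$, so that $\E\big[d_m^N(1)\mid r_1=r\big]\le C_1(v(N)/N)^{1/d}+4\,\mathrm{diam}(\Gamma)/v(N)$, uniformly in $r$; averaging over $r_1$ and absorbing the finitely many small values of $N$ (for which both sides are bounded and the right-hand side bounded below by a positive constant) into the constant yields the announced bound.

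The step I expect to be the real obstacle is not the probabilistic one --- that is a standard nearest-neighbour occupancy computation --- but the geometric input of the second paragraph: stating precisely and uniformly over all $r\in\Gamma$ (boundary points included) the volume lower bound $p_r(\rho)\ge a\rho^d$, i.e.\ pinning down what ``regular compact subset of $\R^d$'' should mean so as to exclude outward cusps and other degenerate boundary behaviour. Everything else --- exchangeability, the inclusion $\{d_m^N(1)>\rho\}\subseteq\{N_\rho(r)<v(N)\}$ up to null ties, the choice of $\rho^\star$ and the Chernoff tail --- is routine.
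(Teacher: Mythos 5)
Your proof is correct and follows the same skeleton as the paper's: pick a radius $\rho\sim (v(N)/N)^{1/d}$ so that the ball around neuron $i$ is expected to contain more than $v(N)$ points, bound the probability that it contains fewer, and split the expectation of $d_m^N(i)$ using the finite diameter of $\Gamma$. The two places where you genuinely diverge are worth noting. First, the concentration step: the paper uses a second-moment (Chebyshev-type) bound on $\frac1{v(N)}\sum_j(z_j^N-\E_i[z_j^N])$, and it is precisely this variance estimate that produces the $1/v(N)$ term in the stated inequality; you instead apply a multiplicative Chernoff bound to the binomial count, obtaining an exponentially small tail $e^{-v(N)/4}$ which you then deliberately weaken to $4/v(N)$ to match the statement. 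Your route therefore proves a strictly stronger result (the $1/v(N)$ term could be replaced by $e^{-cv(N)}$), at the cost of invoking a sharper inequality. Second, the boundary treatment: the paper conditions on $r_i$ lying in the interior and takes $N$ large enough that $B(r_i,\alpha(N))\subset\Gamma$, with the threshold $N_0$ depending on $r_i$ --- a point that is left implicit when the bound is then asserted uniformly and averaged over $r_i$. Your uniform volume lower bound $p_r(\rho)\ge a\rho^d$ for all $r\in\Gamma$ (interior cone/corkscrew condition, which is a reasonable reading of ``regular compact'') handles boundary neurons cleanly and makes the uniformity in $i$ immediate; you are right that this is the one hypothesis that needs to be pinned down, and your identification of it as the real content of ``regular'' is apt.
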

	
	 \begin{proof}
	 	We have assumed that the locations $r_i$ are iid with law $\lambda$ absolutely continuous with respect to Lebesgue's measure with density lowerbounded by some positive quantity $\lambda_{min}>0$. Let us fix a neuron $i$ at location $r_i$ which is almost surely in the interior of $\Gamma$. We are interested in the distances between different neurons within the microcircuit around $i$, and will therefore consider the distribution of relative locations of neurons belonging to $\V(i)$ conditionally to the location $r_i$ of neuron $i$. We will denote $\E_i$ the expectation under this conditioning. It is clear that the set of random variables $\{d_{ij}\}$ are identically distributed. Moreover, these are independent conditionally on the value of $r_i$. We will show that, for any neuron $j\in\V(i)$, the distance $d_{ij}$ tends to zero as $N$ increases with probability one. To this end, we use the characterization of the maximal distance $d_m(i)=\max_{j\in\V(i)} d_{ij}$ as the minimal radius $r$ such that the ball centered at $r_i$ with diameter $r$ contains $v(N)$ points: 
	\[d_{m}(i)\leq \inf\left\{r>0; M_N(r):= \sum_{j=1}^N \ind{B(r_i,r)} (r_j) \geq v(N)\right\}.\]
	We will show that there exists a quantity $\alpha(N)$ tending to zero such that $d_m(i)\leq \alpha(N)$ with large probability, i.e. $M_N(\alpha(N))>v(N)$. To this end, we start by noting that conditionally on $r_i$, the random variables $(z_j^N=\ind{B(r_i,\alpha(N))} (r_j) ; j=\{1\cdots N\})$ are independent, identically distributed. Moreover, for $r_i$ fixed, there exists $N_0>0$ such that for any $N>N_0$, $B(r_i,\alpha(N))\subset \Gamma$. Therefore, for $N>N_0$, the random variables $z_j^N$ are such that:
	\[\begin{cases}
		\E_i{[z_j^N]}=\int_{B(r_i,\alpha(N))} d\lambda(r)\in [\lambda_{min},\lambda_{max}]\times \gamma(N)\\
		\text{Var}_i(z_j^N) = \E_i{[z_j^N]} - \E_i{[z_j^N]}^2 \in [\lambda_{\min} - \lambda_{\max} \gamma(N),\lambda_{\max}-\lambda_{\min} \gamma(N)] \times \gamma(N)
	\end{cases}\]
	where $[a,b]\times c = [ac,bc]$ and 
	\[\gamma(N)=K_d \alpha(N)^d.\] 
	with $K_d=\frac{\pi^{d/2}}{\Gamma(\frac d 2 + 1)}$ the volume of the unit ball in $\R^d$. 
	The radius $\alpha(N)$ is chosen such that:
	\[\alpha(N)= \eta \left(\frac{1}{K_d\lambda_{min}}\frac{v(N)}{N}\right)^{1/d}\]
	with $\eta>1$. This assumption implies that 
	\[\gamma(N) = \frac{\eta^d}{\lambda_{\min}} \frac{v(N)}{N} \qquad \text{ and } \qquad \E_i[z_1^N]\geq \eta^d \frac{v(N)}{N}\]
	We have:
	\[\frac 1 {v(N)}\sum_{j=1}^N z_j^N - \frac {N}{v(N)}\E_i{[z_j^N]}= \frac 1 {v(N)}\sum_{j=1}^N (z_j^N-\E_{i}{[z_j^N]}) .\]
	which tends to zero in probability, since (we recall that the variance is of order $\gamma(N)$ for $N$ large)
	\begin{align*}
		\E_i\left[ \left(\frac 1 {v(N)}\sum_{j=1}^N \left(z_j^N-\E_i{[z_j^N]}\right)\right)^2\right] = \frac N {v^2(N)} \text{Var}_i(z_j^N) = O \left(\frac 1 {v(N)}\right).
	\end{align*}
	The quantity $M_N(\alpha(N))/v(N)$ is therefore lowerbounded, with overwhelming probability, by $ \lambda_{\min} \frac{1}{K_d}\frac N {v(N)} \alpha(N)^d$ which is, under our assumption on $\eta$, is greater than $1$ for $N$ large enough: with overwhelming probability, for large $N$, the microcircuit is fully included in the ball of radius $\alpha(N)$. 
	
	We have assumed that the set $\Gamma$ is bounded. Let us denote by $d(\Gamma)$ its diameter (i.e. the maximal distance between two points in $\Gamma$). We have:
	\begin{align*}
		\mathcal{E}_i [d^N_m(i)] &= \E_i[d^N_m(i) \mathbbm{1}_{d^N_m(i)\leq 2\alpha(N)}] + \E_i[d^N_m(i) \mathbbm{1}_{d^N_m(i)> 2\alpha(N)}]\\
		& \leq 2\alpha(N) + d(\Gamma)\Pc_i\left[M_N(\alpha(N))<v(N) \right]\\
		& \leq 2\alpha(N) + d(\Gamma)\Pc_i\left[ \left \vert \frac{M_N(\alpha(N))}{v(N)}-\frac{N}{v(N)} \E_i{[z_j^N]} \right\vert  >\eta^d-1 \right]\\
		& \leq 2\alpha(N) + d(\Gamma)\frac{\E_i{ \left[\left(\frac 1 {v(N)}\sum_{j=1}^N \left(z_j^N-\E_i{[z_j^N]}\right)\right)^2\right]}}{(\eta^d-1)^2} \\
		&\leq C \left(\left(\frac{v(N)}{N})\right)^{\frac 1 d} + \frac{1}{v(N)}\right)
	\end{align*}
	which ends the proof. 
\end{proof}

	Let us now introduce the dynamics of the neurons activity. The state of each neuron $i$ in the network is described by a $q$-dimensional variable $X^i\in E:=\R^{q}$, typically corresponding to the membrane potential of the neuron and possibly additional variables such as those related to ionic concentrations and gated channels. These variables have a stochastic dynamics. In order to deal with these stochastic evolutions, we introduce a new complete probability space $(\Omega, \F,\P)$ endowed with a filtration $(\F_t)_{t\geq 0}$ satisfying the usual conditions, and we denote by $\Exp$ the expectation with respect to this probability space. Note that this space is distinct from the configuration space $(\Omega',\F',\Pc)$. Once a configuration $\A_N$ is fixed for a $N$-neurons network, it is frozen and each neuron will have a random evolution following the equations:
	\begin{multline}\label{eq:NetworkSpace}
		d\,X^{i,\A_N}_t=\Big(f(r_{i},t,X^{i,\A_N}_t)
		+ \frac{\bar{J}}{v(N)}\sum_{j\in \V(i)} b(X^{i,\A_N}_t,X^{j,\A_N}_{t-\tau(r_i,r_j)}) + \sum_{j=1}^N M_{ij} b(X^{i,\A_N}_t,X^{j,N}_{t-\tau(r_i,r_j)}) \Big) \,dt \\+ \sigma(r) d\tilde{W}^{i}_t.
	\end{multline}
	where $f(r,t,x):\Gamma\times\R\times E\mapsto E$ governs the intrinsic dynamics of each cell, $(\tilde{W}_t^i)$ is a sequence of independent $(\Omega,\F, (\F_t),\P)$ Brownian motions of dimension $m$ modeling the external noise, $\sigma(r):\Gamma\mapsto \R^{q\times m}$ a bounded and measurable function of $r\in\Gamma$ modeling the level of noise at each space location, and $b: E^2\mapsto E$ the interaction function. The map $\tau(r,r'):\Gamma^2\mapsto \R^+$ is the interaction delay between neurons located at $r$ and those at $r'$ which is assumed to be of the form:
	\[ \tau(r,r')=\tau_s+\vert r-r'\vert/c\] 
	where $\tau_s$ is the synaptic transmission time and $\vert r-r' \vert/c$ the transport time ($c$ is the transmission speed assumed constant). Since $\Gamma$ is bounded, all delays are bounded by a finite quantity $\tau$ (in our notations, $\tau=\tau_s+d(\Gamma)/c$). In what follows, we will use the shorthand notation $\tau_{ij}=\tau(r_i,r_j)$.
	The parameters of the system are assumed to satisfy the following assumptions:
	\renewcommand{\theenumi}{(H\arabic{enumi})}
	\begin{enumerate}
		\item\label{Assump:LocLipschSpace} $f$ is $K_f$-Lipschitz-continuous with respect to all three variables,  
		\item\label{Assump:LocLipschbSpace} $b$ is $L$-Lipschitz-continuous with respect to both variables and bounded. We denote
		\[\Vert b\Vert_{\infty}=\sup_{(x,y)\in E^2} \vert b(x,y)\vert\]
	\item\label{Assump:LinearGrowth} The drift satisfies uniformly in space ($r$) and time ($t$), the inequality:
	\[x^T f(r,t,x)+ \vert \sigma(r)\vert^2 \leq C \; (1+\vert x \vert^2)\]
	\item\label{Assump:SpaceContinuity} The drift, delay, diffusion and connectivity functions are regular with respect to the space variables $(r,r')\in \Gamma^2$ (we will assume for instance that these are all $K_{\Gamma}$-Lipschitz continuous).
	\end{enumerate}
	
	Let us first state the following proposition ensuring well-posedness of the network system under the assumptions of the section:
	\begin{proposition}\label{pro:ExistenceUniquenessNetwork}
		Let $(X^0_t)_{t\in[-\tau,0]}$ a square integrable process with values in $E^N$. Under the current assumptions, for any configuration $\A_N$ of the network, there exists a unique strong solution to the network equations \eqref{eq:NetworkSpace} with initial condition $X^0$. This solution is square integrable and defined for all times.
	\end{proposition}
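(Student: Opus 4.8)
This is a standard well-posedness result for a finite system of stochastic delay differential equations whose coefficients are, on each finite horizon, globally Lipschitz and of linear growth, so the plan is a Picard fixed-point argument. Fix a configuration $\A_N$; then $N$, the locations $(r_i)$, the neighbourhoods $\V(i)$ and the weights $M_{ij}$ are all deterministic, and in particular $\sum_{j\in\V(i)}|\bar J|/v(N)=|\bar J|$ and $\sum_{j=1}^N|M_{ij}|\le\|J\|_\infty/\beta(N)$ are finite deterministic constants. For $T>0$ I would work in the space $\mathcal H_T$ of $(\F_t)$-adapted, a.s.\ continuous, $E^N$-valued processes $X=(X^i)_{i=1}^N$ on $[-\tau,T]$ with $X|_{[-\tau,0]}=X^0$, equipped with $\|X\|_T^2=\Exp[\sup_{-\tau\le t\le T}|X_t|^2]$; this is a complete metric space (a closed subspace of $L^2(\Omega;C([-\tau,T],E^N))$). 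Define $\Phi:\mathcal H_T\to\mathcal H_T$ by $\Phi(X)^i_t=X^{0,i}_0+\int_0^t(\text{drift of \eqref{eq:NetworkSpace} along }X)\,ds+\int_0^t\sigma(r_i)\,d\tilde W^i_s$ for $t\ge0$ and $\Phi(X)^i_t=X^{0,i}_t$ on $[-\tau,0]$; a fixed point of $\Phi$ is exactly a strong solution on $[-\tau,T]$.

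\textbf{Stability of $\mathcal H_T$.} I would check $\Phi(\mathcal H_T)\subset\mathcal H_T$ using assumption~\ref{Assump:LocLipschSpace}, which gives $|f(r,t,x)|\le C_T(1+|x|)$ for $t\le T$ since $\Gamma$ is bounded, the boundedness of $b$ from~\ref{Assump:LocLipschbSpace}, the boundedness of $\sigma$, the finiteness of the interaction weights for a fixed configuration, and the Burkholder--Davis--Gundy inequality for the (here, non-path-dependent) martingale term. This yields $\Exp[\sup_{s\le t}|\Phi(X)_s|^2]\le C(1+\int_0^t\Exp[|X_s|^2]\,ds)$, finite by Gronwall, with $C$ depending on $T$, $\A_N$ and the square-integrability of $X^0$; adaptedness and continuity of $\Phi(X)$ are immediate. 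One could alternatively invoke~\ref{Assump:LinearGrowth} together with It\^o's formula for the a priori bound, but the Lipschitz estimate on $f$ already suffices on finite horizons.

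\textbf{Contraction.} For $X,Y\in\mathcal H_T$ the noise terms cancel in $\Phi(X)-\Phi(Y)$, and the Lipschitz bounds in~\ref{Assump:LocLipschSpace}--\ref{Assump:LocLipschbSpace}, combined with the observation that each delayed argument $X^j_{s-\tau_{ij}}$ reaches back either into $[0,s]$ or into the common initial segment $[-\tau,0]$ (on which $X$ and $Y$ coincide), so that $|X^j_{s-\tau_{ij}}-Y^j_{s-\tau_{ij}}|\le\sup_{u\le s}|X_u-Y_u|$, give $\|\Phi(X)-\Phi(Y)\|_t^2\le C_1\int_0^t\|X-Y\|_s^2\,ds$. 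Iterating this inequality yields $\|\Phi^n X-\Phi^n Y\|_T^2\le\frac{(C_1T)^n}{n!}\|X-Y\|_T^2$, so some power $\Phi^n$ is a strict contraction on $\mathcal H_T$ and $\Phi$ has a unique fixed point, which is the unique strong solution on $[-\tau,T]$ and is square integrable by construction. Since $T$ is arbitrary and, by uniqueness, the solutions on $[0,T]$ and $[0,T']$ agree on $[0,T]$ for $T<T'$, one patches them into a unique strong solution defined for all times.

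\textbf{Main obstacle.} There is no genuine difficulty here; the only point deserving care is the delay handling in the contraction step, i.e.\ closing the Gronwall loop via the running supremum together with the fact that delayed values live in the already-controlled past or in the fixed initial segment. It is also worth stressing that for a \emph{fixed} configuration the interaction sums are finite sums with bounded deterministic weights, so the blow-up of $1/\beta(N)$ and of the number of macro-connections is completely irrelevant at this stage: it becomes the central issue only in the thermodynamic-limit analysis carried out in the later sections.
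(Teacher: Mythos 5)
Your proposal is correct and follows essentially the same route as the paper, which omits the details and simply invokes the classical theory of delayed SDEs (Mao, Theorem 5.2.2) together with the Picard fixed-point scheme of its Theorem~\ref{thm:ExistenceUniquenessSpace}; your observation that for a fixed configuration the interaction sums are finite with deterministic weights bounded by $|\bar J|$ and $\Vert J\Vert_\infty/\beta(N)$ matches the paper's own remark following the proposition. The only cosmetic difference is that you derive linear growth from the Lipschitz assumption \ref{Assump:LocLipschSpace} on finite horizons where the paper points to the monotone growth condition \ref{Assump:LinearGrowth}, an alternative you yourself note.
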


	The proof of this proposition is classical. It is a direct application of the general theory of SDEs in infinite dimensions~\cite[Chapter 7]{da-prato:92}, and elementary proof in our particular case of delayed stochastic differential equations can be found in~\cite[Theorem 5.2.2]{mao:08}: for any fixed configuration, we have a regular $N$-dimensional SDE with delays satisfying a monotone growth condition~\ref{Assump:LinearGrowth} ensuring a.s. boundedness for all times of the solution. The proof of this property is essentially based on the same arguments as those of the proof of theorem~\ref{thm:ExistenceUniquenessSpace}, and the interested reader is invited to follow the steps of that demonstration. 
	
	It is important to note that the bound one obtains on the expectation of the squared process depends on the configuration of the network. Indeed, the macroscopic interaction term involves the sum of a random number of terms $n_{i}(\A_N) = \sum_{j}\chi_{ij}$ rescaled by $\frac{1}{N\beta(N)}$. The quantity $n_i$ can take large values (up to $N$) with positive (but small) probability, and therefore the scaling coefficient is not enough to properly control such cases. The bound obtained by classical methods will therefore diverge in $N$, and this will be a deep question for our aim to prove convergence results as $N\to\infty$. In the present manuscript, we will be able to handle these terms properly in that limit by using fine estimates related to $n_i(\A_N)$, see lemma~\ref{lem:SumChi}.

We are interested in the limit, as $N\to\infty$, of the behavior of the neurons. Since we are dealing with diffusions in random environment, there are at least two notions of convergence: \emph{quenched} convergence results valid for almost all configuration $\A$, and \emph{annealed} results valid for the law of the network averaged across all possible configurations. Here, we will show averaged convergence results as well as quenched properties along subsequences. 

Similarly to what was observed in~\cite{touboulNeuralfields:11}, the limit of such spatially extended mean-field models will be stochastic processes indexed by the space variable, which, as a function of space, are not measurable with respect to the Borel algebra $\B(\Gamma)$. As noted in~\cite{touboulNeuralfields:11}, this is not a mathematical artifact of the approach, since neurons accumulating on the neural field are driven by independent Brownian motions, and therefore no regularity is to be expected in the limit. However, even if trajectories are highly irregular, this will not be the case of the law of these solutions. In order to handle this irregularity, we will use the \emph{spatially chaotic} Brownian motion on $\Gamma$, a two-parameter process $(t,r)\in\R^+\times\Gamma \mapsto W_t(r)$ such that for any fixed $r\in\Gamma$, the process $t\mapsto W_t(r)$ is a $m$-dimensional standard Brownian motion, and for $r\neq r'$ in $\Gamma$, the processes $W_t(r)$ and $W_t(r')$ are independent\footnote{ We will also use the terminology of~\cite{touboulNeuralfields:11} and will qualify a process $\zeta_t(r)$ of \emph{spatially chaotic} if the processes $\zeta_t(r)$ and $\zeta_t(r')$ are independent for any $r\neq r'$.}. This process is relatively singular seen as a spatio-temporal process: in particular, it is not measurable with respect to $\mathcal{B}(\Gamma)$. The spatially chaotic Brownian motion is distinct from other more usual spatio-temporal processes. In particular, its covariance is $\Ex{W_t(r)W_{t'}(r')}=(t\wedge t') \delta_{r=r'}$: the covariance is not measurable with respect to $\B(\Gamma)$. In contrast, the more classical space-time Brownian motion (the process corresponding to space-time white noise differential terms) on the positive line ($\Gamma=\R^+$) has a covariance $(t\wedge t') (r\wedge r')$: it is continuous with respect to space. It is also distinct from Wiener processes on Hilbert spaces~\cite[Chapter 4.1.]{da-prato:92} (a.k.a. cylindrical Brownian motions) which have a covariance defined through a trace-class operator on the Hilbert space, and may be decomposed as the sum of standard Brownian motions on a basis of that Hilbert space (i.e., there is a countable number of Brownian motions involved). The chaotic Brownian motion, due to his high singularity as a space-time process, is more suitably seen as an infinite collection of Wiener processes. 

	We will show that the network equations~\eqref{eq:NetworkSpace} satisfies the propagation of chaos property in the limit where $N$ goes to infinity, and that the state of the network converges towards a very particular McKean-Vlasov equation involving a spatially chaotic Brownian motion. The propagation of chaos property (Boltzmann's molecular chaos hypothesis, or Sto\ss zahlansatz) states that, provided that the initial conditions of all neurons are independent, the law of any finite set of neurons converge to a product of laws (loosely speaking, are asymptotically independent) for all times (see~\cite{sznitman:89}). In our network, this property means that the heuristic notion of Boltzmann's Sto\ss zahlansatz applies in that the dependence relationship between the state of, say 2, fixed neurons in the network, dilute away in the thermodynamic limit, and these end up being independent in the large $N$ limit. In detail, for almost all configuration of the network, the asymptotic law of neurons located at $r$ in the support of $\lambda$ will be measurable with respect to $(\Gamma,\B(\Gamma))$ and converge towards the stochastic neural field mean-field equation with delays:
	\begin{multline}\label{eq:MFESpace}
		d\,\bar{X}_t(r)=f(r,t,\bar{X}_t(r)) \, dt + \sigma(r) dW_t(r) + \bar{J}\Exp_{\bar{Z}}[b(\bar{X}_t(r),\bar{Z}_{t-\tau_s}(r))]\,dt\\
		+ \int_{\Gamma} J(r,r')\Exp_{\bar{Z}}[b(\bar{X}_t(r),\bar{Z}_{t-\tau(r,r')}(r'))] \, d\lambda(r') \,dt 
	\end{multline}
	where $(W_t(r))_{t\geq 0, r\in\Gamma}$ is a spatially chaotic Brownian and the process $(\bar{Z})$ is independent and has the same law as $(\bar{X})$. In other words, we will show that the law of the solution $X_t(r)$, noted $m(t,r,dy)$, is measurable with respect to $\mathcal{B}(\Gamma)$, and that the mean-field equation can be expressed as the integro-differential McKean-Vlasov equation:
	\begin{multline*}
		d\,\bar{X}_t(r)=f(r,t,\bar{X}_t(r)) \, dt 
		+ \sigma(r) dW_t(r) +\bar{J}\int_E b(\bar{X}_t(r),y ) m(t-\tau_s,r,dy) \,dt \\
		+ \int_{\Gamma} J(r,r') \int_E b(\bar{X}_t(r),y ) m(t-\tau(r,r'),r',dy) \, d\lambda(r') \,dt .
	\end{multline*}
	Let us eventually give the Fokker-Planck equation on the possible density $p(t,r,y)$ of $m(r,t)$ with respect to Lebesgue's measure:
	\begin{multline}\label{eq:FP}
		\partial_t p(t,r,x)=-\nabla_x\Bigg \{\bigg(f(r,t,x)  + \bar{J}\int_E b(x,y) p(t-\tau_s,r,y)\,dy \\+ \int_{\Gamma} J(r,r')\int_E b(x,y ) p(t-\tau(r,r'),r',y) \, d\lambda(r') \bigg)p(t,r,x)\Bigg\} 
		+ \frac {\vert \sigma(r)\vert^2} 2 \Delta_x\left[  p(t,r,x)\right] .
	\end{multline}

	The mean-field equations~\eqref{eq:MFESpace} are similar to those found in the setting of~\cite{touboulNeuralfields:11} but present an additional term related to the presence of a micro-circuit, showing the local averaging effects arising in our setting. Interestingly, this shows a kind of universal behavior across all possible choices of parameters $v(N)$ and $\beta(N)$, i.e. across possible local statistics of the topology. The limit equations are very complex: similarly to what was discussed in~\cite{touboulNeuralfields:11}, they resemble McKean-Vlasov equations but involve delays, spatially chaotic Brownian motions and an `integral over spatial locations' (in a sense that will be made clearer in the sequel). This is hence a very unusual stochastic equation we need to thoroughly study in order to ensure that these make sense and are well-posed. The existence and uniqueness of solutions to this mean-field equation are addressed in section~\ref{sec:ExistenceUniquenessSpace}, and the proof of the propagation of chaos and convergence of the network equations towards the solution of the mean-field equation is performed in section~\ref{sec:PropaChaSpace}.

%%%%%%%%%%%%%%%%%%%%%%%%%%%%%%%
	\section{Analysis of the mean-field equation}\label{sec:ExistenceUniquenessSpace}
%%%%%%%%%%%%%%%%%%%%%%%%%%%%%%%

	The mean-field equation~\eqref{eq:MFESpace} involves two unusual terms: a stochastic integral involving spatially chaotic Brownian motions and an integrated McKean-Vlasov mean-field term with delays. The spatially chaotic Brownian motion was introduced in~\cite{touboulNeuralfields:11}. In order to handle these equations, we start by introducing and discussing the functional spaces in which we are working, and the notion of solutions to these singular equations. 

 First of all, in order to make sense of the mean-field equation~\eqref{eq:MFESpace}, we need to show that the Lebesgue's integral over $\Gamma$ term is well defined. This integral involves the expectation of the process, so even though the solution is not measurable with respect to space, its expectation may be depending on the regularity of its law with respect to space. In this view, we consider the set $\mathcal{Z}_T$ of spatially chaotic processes $(X_t(r))$ defined for times $t\in[-\tau,T]$ that have the following continuity property:  there exists a \emph{coupled process} $\hat{X}_t(r)$ indexed by $r\in\Gamma$, such that for any fixed $r\in\Gamma$, $\hat{X}_t(r)$ has the same law as $X_t(r)$, and moreover, there exists a constant $C>0$  such that for any $(r,r')\in\Gamma^2$:
		\begin{equation}\label{eq:RegSpace}
			\Ex{\sup_{t\in [-\tau,T]}\left\vert \hat{X}_t(r)-\hat{X}_t(r')\right\vert}\leq C (\vert r-r'\vert + \sqrt{\vert r-r'\vert}).
		\end{equation}
		Note that in this case, for any Lipschitz-continuous function $\varphi$, the map $r\mapsto \Ex{\varphi(X_t(r))}$ is continuous (H\"older $1/2$). In particular, it is measurable. One can then define the Lebesgue's integral of it over $\Gamma$. Such processes are called
		 \emph{chaotic processes with regular law}. 
		\begin{remark}
			We note that in the absence of space-dependent delays, the process is more regular (Lipschitz-continuous). 
		\end{remark}

Moreover, stochastic processes $(X_t(r)) \in \mathcal{Z}_T$ are said square integrable for all $r\in\Gamma$ if:
		\[\sup_{r\in\Gamma} \Ex{\vert X_t(r)\vert^2}<\infty.\]

We further consider the subset $\Z_T^{\star}\subset \Z_T$ composed of processes $(X_t(r))$ that satisfy the following regularity in time: there exists $C>0$ such that:
		\[
			\Ex{\left \vert X_t(r)-X_{t'}(r)\right \vert }\leq C (\sqrt{\vert t-t'\vert} + \vert t-t'\vert).
		\]
Eventually, for a process $(X_t(r))_{r\in\Gamma}\in \mathcal{Z}_T$, we define the squared norm: 
				\begin{equation}\label{eq:Norm}
					\norm{T}{Z}=\int_{\Gamma}\Ex{\sup_{s\in[-\tau,T]}\vert{Z_s({r})}\vert^2}\,d\lambda(r)
				\end{equation} 
				and the $\mathbbm{L}^1$ norm:
				\begin{equation*}
					\Vert{Z}\Vert^1_T=\int_{\Gamma}\Ex{\sup_{s\in[-\tau,T]}\vert{Z_s({r})}\vert}\,d\lambda(r).
				\end{equation*}
				These clearly define norms on random variables indexed by $r\in\Gamma$, when identifying processes that are $\lambda\otimes\P$-a.s. equal. We denote $\mathcal{Z}_T^2$ the set of of random variables in $\mathcal{Z}_T$ such that $\norm{T}{Z}<\infty$. Note that this norm depends on $\lambda$ the distribution over $\Gamma$ of neurons. It is of course possible to define a norm using Lebesgue's measure on $\Gamma$, which would be in that case independent of the choice of $\lambda$. The two obtained measures will be of course equivalent since here we assumed that $\lambda$ was equivalent to Lebesgue's measure.

\begin{example}
	Let us start by giving a simple yet informative example of such process. Let $W_t(r)$ be a spatially chaotic Brownian motion, and consider $(\Delta_t(r))_{t\in[0,T], r\in\Gamma}$ a $\F_t$-progressively measurable real-valued process indexed by $r\in \Gamma$ that belongs to $\Z_T$ and which is independent of the collection of Brownian motions $(W_t(r))$. We denote by $\hat{\Delta}_t(r)$ the coupled process corresponding to the regularity condition. We assume that for any $(r,r')\in \Gamma$ we have 
			\begin{equation}\label{eq:BoundedDelta}
				\begin{cases}
					\Ex{\vert\Delta_s(r)\vert^2}< C<\infty, \text{ and}\\
					 \Ex{\vert \hat{\Delta}_t(r)-\hat{\Delta}_t(r')\vert^2}\leq C^2 (\vert r-r'\vert + \sqrt{\vert r-r'\vert})^2
				\end{cases} 
			\end{equation}
			Since for  any fixed $r\in\Gamma$, the process $W_t(r)$ is a standard Brownian motion, the process $N_t(r)$ defined by the stochastic integral:
			\[N_t(r):=\int_0^t \Delta_s(r)dW_s(r)\] 
			is well defined. It is spatially chaotic since for $r\neq r'$ the Brownian motions $W_t(r)$ and $W_t(r')$ and the processes $\Delta_t(r)$ and $\Delta_t(r')$ are independent. Moreover, they have a regular law in the sense of our definition. Indeed, let $(W_t)$ be a standard Brownian motion independent of $\hat{\Delta}_t(r)$. The process 
			\[\hat{N}_t(r):=\int_0^t \hat{\Delta}_s(r)dW_s\] 
			has the same law as $N_t(r)$, and moreover, 
			\[\Ex{\vert \hat{N}_t(r)-\hat{N}_t(r')\vert}\leq \left(\int_0^t \Ex{\vert  \hat{\Delta}_s(r)-\hat{\Delta}_s(r')\vert^2}\,ds\right)^{1/2}\leq \sqrt{T} C (\vert r-r'\vert+\sqrt{\vert r-r'\vert}).\] 
			The process $N_t(r)$ therefore belongs to $\mathcal{Z}_T$. Moreover, it is a square integrable martingale with quadratic variation $\int_0^t \Ex{\vert \Delta_s(r)\vert^2}\,ds$. This implies that for any $t<t'$ in $[0,T]$:
			\[\Ex{\vert N_t(r)-N_{t'}(r)} = \Ex{\vert\int_t^{t'} \Delta_s(r) dW_s(r)\vert}\leq \left(\int_{t}^{t'} \Ex{\vert\Delta_s(r)\vert^2}\,ds\right)^{1/2}\leq C \sqrt{t'-t}.\]
			
			The process therefore belongs to $\Z_T^{\star}$. 
			
			Note that this example illustrates an important fact. The process $N_t(r)$ involves two processes, $\Delta_t(r)$ and $W_t(r)$, and in order to build up the coupled $\hat{N}_t(r)$, we used the fact that we were able to find two processes $\hat{\Delta}_t(r)$ and $W_t$ such that the pairs $(\Delta_t(r),W_t(r))$ and $(\hat{\Delta}_t(r),W_t)$ had the same law (here, the two components are independent). This fact will be also prominent in the definition of the solutions to the mean-field equation.
\end{example}

\renewcommand{\theenumi}{(\roman{enumi})}

\begin{definition}\label{def:solution}
	A \emph{strong solution} to the mean-field equation~\eqref{eq:MFESpace} on the probability space $(\Omega,\F,\P)$, with respect to the chaotic Brownian motion $(W_t(r))$ and with an initial condition $\zeta^0_t\in \Z_0^{\star}$ is a spatially chaotic process $X=\{X_t(r); -\tau\leq 0 \leq T, r\in\Gamma\}\in \Z_T^{\star}$, i.e. with continuous sample paths and regular law, such that:
	\begin{enumerate}
		\item there exists a coupling $(W_t,\hat{\zeta}^0_t(r),\hat{X}_t(r) ; t\in [-\tau, T], r\in\Gamma)$, such that for any fixed $r\in \Gamma$ $(W_t,\hat{\zeta}^0_t(r),\hat{X}_t(r))\eqlaw(W_t(r),{\zeta}^0_t(r),{X}_t(r))$ and moreover:
		\[\begin{cases}
			\Ex{\sup_{t\in [-\tau,0]}\left\vert \hat{\zeta}^0_t(r)-\hat{\zeta}^0_t(r')\right\vert}\leq C (\vert r-r'\vert + \sqrt{\vert r-r'\vert}),\\
			\Ex{\sup_{t\in [-\tau,0]}\left\vert \hat{X}_t(r)-\hat{X}_t(r')\right\vert}\leq C (\vert r-r'\vert + \sqrt{\vert r-r'\vert}).
		\end{cases}\]
		This regularity ensures that for any Lipschitz-continuous map $\varphi:E\mapsto E$, the map $r\mapsto\Ex{\varphi(X_t(r))}$ is continuous. In particular, it is measurable and hence the integral $\int_{\Gamma}\Ex{\varphi(X_t(r))}d\lambda(r)$ can be computed in the usual (Lebesgue's) sense. 
		\item for any $r\in \Gamma$, $(X_t(r), t\in[-\tau,T])$ is a strong solution, in the usual sense (see~\cite[defintion 5.2.1]{karatzas-shreve:87}), i.e. it is adapted to the filtration $(\F_t)$, almost surely equal to $\zeta^0_t$ for $t\in [-\tau,0]$, and the equality:
		\[X_t(r)=\begin{cases}
				\zeta^0_t(r) + \int_0^t  f(r,s,X_s(r)) ds+ \int_0^t \sigma(r)   dW_s(r) + \int_0^t\bar{J}\Exp_{Z}[ b(X_s(r), Z_{s-\tau_s}(r)] \,ds\\
				\quad+ \int_0^t \int_{\Gamma} J(r,r')\Exp_{Z} [ b(X_s(r), Z_{s-\tau(r,r')} (r') )]d\lambda(r')\,ds ,\qquad  t>0\\
				 \zeta^0_t(r) ,\qquad t\in [-\tau, 0]\\
				 (Z_t)\eqlaw (X_t) \text{ independent of $(X_t)$ and $(W_t(\cdot))$ }
				\end{cases}\]
				holds almost surely.
	\end{enumerate} 
	
\end{definition}

	\begin{theorem}\label{thm:ExistenceUniquenessSpace}
		Let $(\zeta^0_t(r),\; t\in [-\tau,0],\; r\in\Gamma) \in \mathcal{Z}_{0}^2$ a square-integrable process with a regular law. The mean-field equation \eqref{eq:MFESpace} with initial condition $\zeta^0$ has a unique strong solution on $[-\tau,T]$ for any $T>0$. The solution belongs to $\Z_T^{\star}\cap \Z_T^2$.
	\end{theorem}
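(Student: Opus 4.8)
The plan is to construct the solution by a Picard iteration in the space $\Z_T^{\star}\cap\Z_T^2$, exploiting the coupling mechanism already displayed in the Example above. Given a candidate $X\in\Z_T^{\star}\cap\Z_T^2$, write $m(s,r,\cdot)$ for the law of $X_s(r)$; the regularity of the law built into $\Z_T$ makes $r'\mapsto\int_E b(\cdot,y)\,m(s,r',dy)$ continuous, so the spatial integral below is well defined in Lebesgue's sense. Define $\Phi(X)=Y$ as the process which, for each fixed $r\in\Gamma$, is the unique strong solution of
\begin{multline*}
	Y_t(r)=\zeta^0_t(r)+\int_0^t f(r,s,Y_s(r))\,ds+\int_0^t\sigma(r)\,dW_s(r)+\bar{J}\int_0^t\!\!\int_E b(Y_s(r),y)\,m(s-\tau_s,r,dy)\,ds\\
	+\int_0^t\!\!\int_{\Gamma}J(r,r')\int_E b(Y_s(r),y)\,m(s-\tau(r,r'),r',dy)\,d\lambda(r')\,ds,\qquad t>0,
\end{multline*}
with $Y_t(r)=\zeta^0_t(r)$ on $[-\tau,0]$ and $Y_{\cdot}(r)$ driven by $W_{\cdot}(r)$. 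Because $b$ is bounded and Lipschitz (\ref{Assump:LocLipschbSpace}) and $f$ is Lipschitz with the monotone growth \ref{Assump:LinearGrowth}, the frozen coefficients are globally Lipschitz of linear growth, so existence and uniqueness of $Y_{\cdot}(r)$ is classical (\cite[Theorem 5.2.2]{mao:08}); and by construction a fixed point of $\Phi$ is precisely a strong solution in the sense of Definition~\ref{def:solution}.

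The first task is to check that $\Phi$ maps $\Z_T^{\star}\cap\Z_T^2$ into itself. Square-integrability, with $\norm{T}{Y}<\infty$, follows from Gronwall's lemma applied to $\Ex{\sup_{u\le t}\vert Y_u(r)\vert^2}$ using \ref{Assump:LinearGrowth}, the boundedness of $b$, $\sigma$ and $J$ (uniform in $r$ by \ref{Assump:SpaceContinuity}) and $\zeta^0\in\Z_0^2$, the bounds being uniform in $r$. Time regularity ($Y\in\Z_T^{\star}$) follows from writing $Y_t(r)-Y_{t'}(r)$ as the integral of a drift that is $\L^1$-bounded (since $b$ is bounded and $f$ has an integrable modulus, by \ref{Assump:LinearGrowth} and the $\L^2$ bound just obtained) plus $\sigma(r)(W_t(r)-W_{t'}(r))$, which gives $\Ex{\vert Y_t(r)-Y_{t'}(r)\vert}\le C(\vert t-t'\vert+\sqrt{\vert t-t'\vert})$. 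The delicate point is the spatial regularity \eqref{eq:RegSpace}. Here I would reproduce the device of the Example: pick a standard Brownian motion $(W_t)$ together with coupled families $(\hat\zeta^0_t(r))$ and $(\hat X_t(r))$ realizing, respectively, the regularity of $\zeta^0$ and the $\Z_T^{\star}$-regularity of $X$, and define $\hat Y_t(r)$ by the same equation with $W_{\cdot}(r)$, $\zeta^0$ and $m$ replaced by $W_{\cdot}$, $\hat\zeta^0$ and the law of $\hat X$; then $\hat Y_t(r)\eqlaw Y_t(r)$ for each $r$. Subtracting the equations at $r$ and $r'$ and using \ref{Assump:LocLipschSpace}--\ref{Assump:SpaceContinuity}: the initial-condition and the $\sigma(r)-\sigma(r')$ terms contribute $O(\vert r-r'\vert+\sqrt{\vert r-r'\vert})$; the $f$-difference contributes $K_{\Gamma}\vert r-r'\vert$ plus $K_f\,\Ex{\sup_{u\le s}\vert\hat Y_u(r)-\hat Y_u(r')\vert}$; the mean-field differences contribute, by Lipschitz-continuity of $b$, a term $C\,\Ex{\sup_{u\le s}\vert\hat Y_u(r)-\hat Y_u(r')\vert}$ (first argument of $b$), a term $C(\vert r-r'\vert+\sqrt{\vert r-r'\vert})$ coming from the spatial regularity of $X$ (used on $\hat X_{s-\tau_s}(r)-\hat X_{s-\tau_s}(r')$ via \eqref{eq:RegSpace}), a term $K_{\Gamma}\Vert b\Vert_{\infty}\vert r-r'\vert$ from the regularity of $J$, and --- the crucial one --- the delay term $\Vert J\Vert_{\infty}L\int_{\Gamma}\Ex{\vert\hat X_{s-\tau(r,r'')}(r'')-\hat X_{s-\tau(r',r'')}(r'')\vert}\,d\lambda(r'')$. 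Since $\vert\tau(r,r'')-\tau(r',r'')\vert\le\vert r-r'\vert/c$, the \emph{time}-regularity of $X$ (i.e.\ $X\in\Z_T^{\star}$) bounds this last term by $C(\sqrt{\vert r-r'\vert}+\vert r-r'\vert)$ --- this is exactly where the $\sqrt{\,\cdot\,}$ modulus in \eqref{eq:RegSpace} is forced, and why the time- and space-regularities must be propagated together. Gronwall's lemma applied to $u_{r,r'}(t):=\Ex{\sup_{s\le t}\vert\hat Y_s(r)-\hat Y_s(r')\vert}$ then closes the estimate, so $Y\in\Z_T$, and together with the time regularity $Y\in\Z_T^{\star}$.

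It remains to obtain a fixed point. Running the iteration $X^{n+1}=\Phi(X^n)$ from $X^0$ equal to $\zeta^0$ on $[-\tau,0]$ and constant in time on $[0,T]$, and coupling $X^{n+1}$ and $X^n$ through the same chaotic Brownian motion and the same initial condition (so the stochastic integrals cancel in the difference), \ref{Assump:LocLipschSpace}, \ref{Assump:LocLipschbSpace} and the elementary bound obtained by realizing the auxiliary copies in the mean-field terms jointly with the same coupling as $(X^n,X^{n-1})$ give, after integrating in $r$ against $\lambda$ and using the $\L^1$ norm, an inequality $\Vert X^{n+1}-X^n\Vert^1_t\le K\int_0^t\Vert X^n-X^{n-1}\Vert^1_s\,ds$; iterating yields $\Vert X^{n+1}-X^n\Vert^1_T\le\frac{(KT)^n}{n!}\,\Vert X^1-X^0\Vert^1_T$, so $(X^n)$ is Cauchy for $\Vert\cdot\Vert^1_T$ without any smallness assumption on $T$. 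Since the estimates of the previous paragraph are uniform in $n$, the limit remains in $\Z_T^{\star}\cap\Z_T^2$ (one checks that \eqref{eq:RegSpace} and the time-regularity inequality pass to the $\L^1(\lambda\otimes\P)$-limit along a subsequence), so $X:=\lim_n X^n$ is a strong solution, lying in $\Z_T^{\star}\cap\Z_T^2$; the same Gronwall estimate applied to the difference of any two solutions gives $\Vert X-X'\Vert^1_T=0$, hence uniqueness.

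The main obstacle throughout is the spatial-regularity step of the second paragraph. The space-dependent delay $\tau(r,r')$ makes the nonlocal term compare $\hat X$ at two different times, whose separation is only of order $\vert r-r'\vert$; since the iterates are merely H\"older-$\tfrac12$ in time, this contribution can only be controlled at the price of a $\sqrt{\vert r-r'\vert}$ loss, which is why the solution must be sought in $\Z_T^{\star}$ rather than in $\Z_T$ alone, and why the modulus in \eqref{eq:RegSpace} cannot be improved.
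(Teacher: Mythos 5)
Your overall strategy --- a Picard iteration combined with a coupled process $(\hat Y_t(r))$ driven by a single standard Brownian motion to propagate the spatial regularity \eqref{eq:RegSpace}, with the space-dependent delay converted into a time increment of order $\vert r-r'\vert/c$ and absorbed through the H\"older-$1/2$ time regularity of the iterates --- is exactly the paper's argument, and you have correctly isolated the one genuinely delicate estimate (the $\sqrt{\vert r-r'\vert}$ loss forced by the delay term, which is why the iteration must be run in $\Z_T^{\star}$ and not just $\Z_T$). The only structural difference is your choice of map: you freeze the law $m$ extracted from $X$ and solve a classical SDE for $Y$ (so $Y$ appears in $f$ and in the first argument of $b$), whereas the paper's $\Phi(X)$ is fully explicit, with $X$ substituted everywhere on the right-hand side. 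Both work, but note that with your map the difference $X^{n+1}-X^n$ reappears on the right of your recursive inequality, so an intermediate Gronwall step is needed before you can iterate to the factorial bound $\frac{(KT)^n}{n!}$; your displayed recursion as written silently drops that self-referential term.

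The genuine gap is at the very end. You run the contraction entirely in the integrated norm $\Vert\cdot\Vert^1_T=\int_\Gamma\Ex{\sup_{s}\vert\cdot\vert}\,d\lambda(r)$, so your limit is defined, and your uniqueness statement $\Vert X-X'\Vert^1_T=0$ holds, only for $\lambda$-almost every $r\in\Gamma$. Definition~\ref{def:solution} requires condition (ii) to hold for \emph{every} $r\in\Gamma$, and the theorem asserts uniqueness in that class. The paper closes this by a second, pointwise-in-$r$ estimate: setting $N^k_t(r)=\Ex{\sup_{s\le t}\vert X^{k+1}_s(r)-X^k_s(r)\vert}$ and $M^k_t=\Vert X^{k+1}-X^k\Vert^1_t$, one obtains $N^k_t(r)\le K_1\int_0^t N^{k-1}_s(r)\,ds+K_2\int_0^t M^{k-1}_s\,ds$, and feeding in the factorial decay of $M^k$ yields $N^k_T(r)\le \frac{(K_3T)^k}{(k-1)!}D$ at each fixed $r$, hence a Cauchy sequence of processes at every single location. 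This two-norm bootstrap is necessary because the mean-field term at location $r$ sees the other locations only through the $\lambda$-integral, so the pointwise recursion cannot close on its own and genuinely needs the integrated estimate as input; without it your construction produces only a $\lambda$-a.e. solution and $\lambda\otimes\P$-a.s. uniqueness. The fix is short, but it is a required additional step.
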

	\begin{proof}
		This theorem is proved through a usual fixed point argument for a map $\Phi$ acting on stochastic processes $X$ in $\mathcal{Z}_T^2$ defined by:
	\begin{align*}
		\Phi(X)_t(r) &= \begin{cases}
			\zeta^0_0(r) + \int_0^t  f(r,s,X_s(r)) ds+ \int_0^t \sigma(r)   dW_s(r) + \int_0^t\bar{J}\Exp_{Z}[ b(X_s(r), Z_{s-\tau_s}(r)] \,ds\\
					\quad+ \int_0^t \int_{\Gamma} J(r,r')\Exp_{Z} [ b(X_s(r), Z_{s-\tau(r,r')} (r') )]d\lambda(r')\,ds ,\qquad  t>0\\
					 \zeta^0_t(r) \qquad ,\qquad t\in [-\tau, 0]\\
					 (Z_t)\eqlaw (X_t) \text{ independent of $(X_t)$ and $(W_t(\cdot))$ }
		\end{cases}
	\end{align*}	
	We aim at building a sequence of processes by iterating the map $\Phi$ starting from a given initial process, and showing that this constitutes a Cauchy sequence, converging to the unique fixed point of the map, i.e. the unique solution of the mean-field equations. This classical scheme appears relatively complex to handle in our present case. Indeed, the construction of the sequence is not trivial, as we need to be able to integrate the expectation of a function of the processes, hence we need this expectation to be measurable with respect to space. Second is the fact that we aim at showing existence and uniqueness in a relatively strong sense (condition (ii) of the definition) valid for any $r\in\Gamma$ (and not $\lambda$-almost surely as would be the case under the norm~\eqref{eq:Norm}). 
	
	Let us start by showing that we can iterate the map $\Phi$. To this end, we analyze the processes $(Y_t(r))$, image of processes $(X_t(r))\in\Z_T^{\star} \cap \Z_T^2$ under the map $\Phi$. It is easy to see that $Y_t(r)$ is spatially chaotic. Let us start by showing that $Y_t(r)$ is square integrable (in what follows, $C$ denotes a constant independent of time, that may vary from line to line). We note that:
		\[\psi: (r,t,x)\mapsto f(r,s,x) + \bar{J}\Exp_{Z}[b(x,Z_{s-\tau_s}(r))] +\int_{\Gamma} J(r,r') \Exp_Z[b(x,Z_{s-\tau(r,r')})]d\lambda(r')\]
		is Lipschitz-continuous with respect to $(t,x)$ (with constant $K_f + L (\vert \bar{J}\vert+\Vert J \Vert_{\infty})$), and hence $\psi^2(r,t,x)\leq C(1+\vert x\vert^2)$. Standard inequalities allow showing that the value $N_t^Y(r)=\Ex{\sup_{s\in[-\tau,t]} \vert Y_s(r)\vert^2}$ satisfies the relationship:
			\[ N_t^Y(r)\leq 4 \Bigg(\Ex{\sup_{s\in[-\tau,0]} \vert \zeta^0_s(r)\vert^2} + T \,C\int_0^T (1+N_s^X(r))\,ds +4\,T \vert\sigma(r)\vert^2 \Bigg )\]
		which is finite under the assumption that $X_t(r)$ and $\zeta^0$ are square integrable. Note that this property readily implies, by application of Gronwall's lemma, that any possible solution is square integrable. 
		
		The regularity in time is then a direct consequence of this inequality and of the fact that the Lipschitz continuity of $\psi$ implies that $\vert \psi(r,t,x)\vert \leq C (1+t+\vert x \vert)\leq C(1+\vert x \vert)$. Indeed, for $t<t'$, we have
		\begin{align*}
			\Ex{\vert Y_t(r)-Y_{t'}(r)\vert } &\leq \int_{t}^{t'} \Ex{\vert \psi(r,s,X_s(r))\vert} ds + \Ex{\vert \sigma(r) (W_{t'}(r)-W_t(r))\vert}\\
			&\leq  C (1+N_T^X(r)^{1/2}) (t'-t) + \vert \sigma(r)\vert \sqrt{t'-t}.
		\end{align*}
		It therefore remains to show that $Y_t(r)$ is regular in law. Let $W_t$ be a standard Brownian motion, and assume that $(W_t,\hat{\zeta}_0^0(r),\hat{X}_t(r))$ is a coupling of $(W_t(r),\zeta_0^0(r), X_t(r))$ in the sense that they are equal in law for any fixed $r$, and that both $\hat{X}_t(r)$ and $\hat{\zeta}_0^0(r)$ have the regularity property~\eqref{eq:RegSpace} ($X$ is a process satisfying the assumptions of definition~\ref{def:solution}). We define $\hat{Y}_t(r)$ as:
		\begin{multline*}
			\hat{Y}_t(r)=\hat{\zeta}^0_0(r) + \int_0^t  f(r,s,\hat{X}_s(r)) ds+ \sigma(r)   W_t + \int_0^t\bar{J}\Exp_{Z}[ b(\hat{X}_s(r), Z_{s-\tau_s}(r)] \,ds\\
			+ \int_0^t \int_{\Gamma} J(r,u)\Exp_{Z} [ b(\hat{X}_s(r), Z_{s-\tau(r,u)} (u) )]d\lambda(u)\,ds
		\end{multline*}
		It is clear that this process has the same law as $Y_t(r)$ since  $(W_t, \hat{\zeta}_0^0(r),\hat{X}_t(r))\eqlaw (W_t(r), {\zeta}_0^0(r),{X}_t(r))$, and this obviously also holds for the processes $(W_t, \hat{\zeta}_0^0(r), \hat{Y}_t(r))$ and $(W_t(r),{\zeta}_0^0(r), {Y}_t(r))$. Let us denote $D_t^X=\Ex{\sup_{s\in [-\tau,t]} \vert \hat{X}_s(r)-\hat{X}_s(r')\vert}$. We have:
			\begin{align*}
						D_t^Y &\leq C(\vert r-r'\vert + \sqrt{\vert r-r'\vert})+\mathbbm{E}\Bigg[\sup_{s\in [-\tau,t]} \int_0^s \Bigg\{K_f(\vert r-r'\vert +\vert \hat{X}_u(r)-\hat{X}_u(r')\vert) + \vert \bar{J}\vert L  \Big(\vert \hat{X}_u(r)-\hat{X}_u(r')\vert  \\
						& + D_{u-\tau_s} \Big)+ K_{\Gamma} (1+\Vert b\Vert_{\infty})\vert r-r'\vert + \Vert J\Vert_{\infty} L \vert \hat{X}_u(r)-\hat{X}_u(r')\vert + \int_{\Gamma} \mathbbm{E}[\vert   \hat{X}_{u-\tau_s(r,v)}(v)-\hat{X}_{u-\tau_s(r',v)}(v)]d\lambda(v)\Bigg\}\,du\Bigg]\\
						&\leq \Big(C+T (K_{\Gamma}(1+\Vert b\Vert_{\infty})+K_f)\Big) \vert r-r'\vert +  C\Big(1+T\sqrt{K_{\Gamma}}\Vert J\Vert_{\infty} L \Big)\sqrt{\vert r-r'\vert} + \int_0^t \Big(K_f + 2\vert\bar{J}\vert L +\Vert J\Vert_{\infty}\Big) D_s^X\,ds
					\end{align*}
		and we conclude, using the assumption that $D^X_t \leq C (\vert r-r'\vert+\sqrt{\vert r-r'\vert})$, on the regularity of the law of the process $\hat{Y}_t(r)$. In particular, let us emphasize the fact that for  any $\varphi:E\mapsto \R$ a $1$-Lipschitz-continuous function,
\[\sup_{t\in[-\tau,T]} \vert \Ex{\varphi(X_t(r))-\varphi(X_t(r'))} \vert\leq C \Big(\vert r-r'\vert + \sqrt{\vert r-r'\vert}\Big),\] 
which implies that the expectation $\Exp_Z[b(X_t(r),Z_{t-\tau(r,r')}(r'))]$ is measurable with respect to the Borel algebra $\B(\Gamma)$ in $r'$, allowing to make sense of the integral over the space variable $r'$. Let us eventually remark that, again, by Gronwall's lemma, any possible solution has a coupled process satisfying the regularity condition~\eqref{eq:RegSpace}.

These properties ensure that we can make sense of the spatial integral term in the definition of $\Phi$ for iterates of that function. A sequence of processes can therefore be defined by iterating the map. 

We fix $X$ a process in ${\mathcal{Z}_T}$ satisfying the coupling assumptions above (related to definition~\ref{def:solution}), and build the sequence $X^k$ by induction through the recursion relationship $X^{k+1}=\Phi(X^k)$. We show that these processes constitute a Cauchy sequence for  $\Vert \cdot \Vert_{\mathcal{Z}_T}$ . This will not be enough for our purposes: we are interested in proving existence and uniqueness of solutions for all $r$. Equipped with the estimates on the distance~\eqref{eq:Norm}, we will come back to the sequence of processes at single locations, show that these also constitute a Cauchy sequence in the space of stochastic processes in $E$ (which is complete) and conclude. 
	
	Again, one needs to be careful in the definition of the above recursion and build recursively a sequence of processes  $(Z^k)$ independent of the collection of processes $(X^k)$ and having the same law as follows:
	\begin{itemize}
		\item $Z^0$ is independent of $X^0$ and has the same law as $X^0$
		\item for $k\geq 1$, $Z^{k}$ is independent of the sequence of processes $(X^{0},\cdots,X^k)$ and is such that the collection of processes $(Z^{0},\cdots,Z^{k})$ has the same joint law as $(X^{0},\cdots,X^k)$, i.e. $Z^k$ is chosen such as its conditional law given $(Z^0,\cdots,Z^{k-1})$ is the same as that of $X^k$ given $(X^0,\cdots,X^{k-1})$.
	\end{itemize} 
	
	Once all these ingredients have been introduced, it is easy to show that $M^k_t=\Vert{X^{k+1}-X^k}\Vert_t^1$ satisfies a recursion relationship, by decomposing this difference into the sum of elementary terms:
	\begin{align*}
		X^{k+1}_t(r)-X^{k}_t(r) &= \int_0^t \Big\{(f(r,s,X_s^{k}(r))-f(r,s,X_s^{k-1}(r)))\Big\}\,ds \\
		& \quad + \bar{J}\int_0^t \Exp_{Z} [ b(X_s^{k}(r), Z^{k}_{s-\tau_s} (r) )- b(X_s^{k-1}(r), Z^{k-1}_{s-\tau_s} (r)]\\
		 & \quad + \int_0^t \int_{\Gamma} J(r,r')\Big\{\Big( \Exp_{Z} [ b(X_s^{k}(r), Z^{k}_{s-\tau(r,r')} (r') )]\\
		&\qquad \qquad \qquad-\Exp_{Z} [ b(X^{k-1}_s(r), Z^{k-1}_{s-\tau(r,r')} (r') )] \Big)\Big\}d\lambda(r')\,ds \\
			 & =: A_t(r) + B_t(r)+ C_t(r)
	\end{align*}
	and checking that the following inequalities apply:
	\[\Vert{A}\Vert_t^1 \leq K_f \int_0^t \Vert{X^k-X^{k-1}}\Vert_s^1\,ds\]
	through the use of Cauchy-Schwarz inequality, 
	\[\Vert{B}\Vert_t^1\leq 2\vert \bar{J}\vert L \int_0^t \Vert{X^k-X^{k-1}}\Vert_s^1\,ds\]
	by standard McKean-Vlasov arguments, and
		\begin{align*}
				\Vert{C}\Vert_t^1 & = \Exp\Bigg[\int_{\Gamma}\sup_{s\in[0,t]}  \bigg \vert \int_{\Gamma}J(r,r')\int_0^s \Big(\Exp_{Z} [ b(X_u^{k}(r), Z_{u-\tau(r,r')}^{k} (r') ) - b(X_u^{k-1}(r), Z_{u-\tau(r,r')}^{k-1} (r') )]\Big) {du\,}d\lambda(r') \bigg\vert d\lambda(r)\Bigg]\\
			 & \leq \Vert J\Vert_{\infty} \;\int_{\Gamma^2} \int_0^t\Exp\Bigg[ \Exp_{Z} [\Big \vert b(X_u^{k}(r), Z_{u-\tau(r,r')}^{k} (r') ) - b(X_u^{k-1}(r), Z_{u-\tau(r,r')}^{k-1} (r') )]\Big\vert\Big) du\Bigg]d\lambda(r)d\lambda(r') \quad (CS)\\
				&\leq 2 \, L\Vert J\Vert_{\infty} \int_{\Gamma^2} \int_0^t \Ex{\vert X_s^{k}(r)-X_s^{k-1}(r)\vert}\,ds d\lambda(r)d\lambda(r')\quad \ref{Assump:LocLipschbSpace}\\
				&\leq 2 \;L\Vert J\Vert_{\infty} \int_0^t \Vert{X^{k}-X^{k-1}}\Vert_s^1\,ds
			\end{align*}
	These inequalities imply:
		\begin{align}
			\nonumber M_t^k &\leq K' \int_{0}^t M_s^{k-1}\,ds \\
			\label{eq:BoundaryMRec}&\leq \frac{(K' T)^k}{k!} M_T^0
		\end{align}
	with $K'=K_f+2L(\vert \bar{J}\vert + \Vert J\Vert_{\infty})$.
	
	Let us now  denote for $Z\in \mathcal{Z}_T$ the norm $\Vert Z(r) \Vert_t = \Ex{\sup_{s\in[0,t]} \vert Z_s(r) \vert}$ consider $N_t^k(r)=\Vert X^{k+1}(r)-X^{k}(r)\Vert_t$. Similar developments yield to the inequality:
	\begin{align*}
		N_t^k(r) & \leq \Vert A(r)\Vert_t + \Vert B(r)\Vert_t +\Vert C(r)\Vert_t\\
		& \leq (K_f +2 \vert \bar{J}\vert L + \Vert J\Vert_{\infty}L)  \int_0^t N_s^{k-1}(r)\,ds + L\Vert J\Vert_{\infty} \int_0^t M_s^{k-1}\,ds\\
		&\leq K_1 \int_0^t N_s^{k-1}(r)\,ds + K_2\int_0^t M_s^{k-1}\,ds
	\end{align*}
	where $K_1$ and $K_2$ correspond to the constants of the penultimate equation. We denote $K_3=\max(K_1,K_2,K')$ and $D=\max(N_T^0(r),M_T^0)$. By recursion and using equation~\eqref{eq:BoundaryMRec}, we obtain:
	\[N^k_T(r)\leq \frac{(K_3\,T)^k}{(k-1)!} D.\] 
	This implies that the processes $(X^k_t(r))$ for fixed $r$ constitute a Cauchy sequence in the space of stochastic processes. From this relationship, routine methods allow proving existence and uniqueness of fixed point for $\Phi$ (see e.g.~\cite[pp. 376--377]{revuz-yor:99}), and that this fixed point is adapted and almost surely continuous. Proving uniqueness of the solution using equation~\eqref{eq:BoundaryMRec} is then classical. 
	\end{proof}

We therefore proved that there exists a unique solution to the mean-field equation, which moreover is regular in space in the sense defined above. Of course, as stated, the solutions are discontinuous at all points $r\in\Gamma$. The proposition nevertheless ensures a form of regularity in law, which will be central in the sequel to prove averaging effects in the microcircuit.

	\section{The mean-field limit and propagation of chaos}\label{sec:PropaChaSpace}
	Now that we have introduced suitable spaces in which the mean-field equations are well-defined, and proved that the equation was well-posed, we are in a position to demonstrate the main result of the manuscript, namely the convergence in law of the solutions of the network equations~\eqref{eq:NetworkSpace} towards the equations~\eqref{eq:MFESpace}, and the fact that the propagation of chaos property occurs. We consider that the network equations have chaotic initial conditions with law continuous in space $(\zeta^0_t(r))\in \Z_0^{\star} \cap \Z_0^2$. In detail, the initial conditions of the $N$ neurons in the network are considered independent processes $(\zeta^{i}_t)\in \M^2([-\tau,0], E)$ (the space of square integrable processes from $[-\tau,0]$ on $E$) with law equal to  $(\zeta^{0}_t(r_{i}))$.

	Our convergence result raises several difficulties compared to more standard models:
	\begin{itemize}
		\item First is the fact that at the micro-circuit scale, there will be a local averaging principle (yielding the convergence towards a local term $\bar{J}\Exp_Z[b(X_t(r),Z_{t-\tau_s}(r))]$. This property is not classical: indeed, in the network equation, the microcircuit interaction term is $\sum_{j\in \V(i)} b(X^i_t,X^j_{t-\tau_{ij}})$, and therefore involve the state of neurons located at different places on $\Gamma$ and different delays. The convergence will be handled using (i) the fact that in the limit considered, the neurons belong to the microcircuit collapse at a single space location, and (ii) regularity properties of the law of the solution as a function of space and time. This convergence will be the subject of lemma~\ref{lem:ConvMicro}.
		\item Second, the macro-circuit interaction term involve delocalized terms across the neural field. The sum will be shown to converge to a non-local averaged term involving an integral over space. This will be proved through the use of lemma~\ref{lem:ConvMacro} and~\ref{lem:SumChi}.
	\end{itemize}
	Moreover, we will prove our convergence through a non-classical coupling method that we now describe.
	
	\subsection{Coupling between network equations and the continuous mean-field equation}
	Let us now fix a configuration $\A_N$ of the network. The neuron labeled $i$ in the network is driven by the $m$-dimensional Brownian motion $(\tilde{W}^{i}_t)$, and has the initial condition $\zeta^i\in\M(\Ct)$. We aim at defining a spatially chaotic Brownian motion $W^i_t(r)$ on $\R^{m}$ such that the standard Brownian motion $(W^i_t(r_i))$ is equal to $(\tilde{W}^i_t)$, and proceed as follows. Let $(W_t(r))_{t\in[0,T], r\in\Gamma}$ be a ${m}$-dimensional spatially chaotic Brownian motion independent of the processes $(\tilde{W}_t^{j})$. The process $W^i_t(r)$ defined by the coupling:
	\[
	\begin{cases}
		(W^i_t(r)) \;=\; (W_t(r))  \qquad r\neq r_{i}\\
		(W^i_t(r_{i})) \;=\; (\tilde{W}^i_t)
	\end{cases}
	\]
	is clearly a spatially chaotic Brownian motion, and will be used to construct a particular solution of the mean-field equations. In order to completely define a solution of the mean-field equations, we need to specify an initial condition, and aim at coupling it to the initial condition of neuron $i$. To this end, we define a spatially chaotic process $(\tilde{\zeta}^0_t(r)) \in \Z_0^{\star} \cap \Z_0^2$ equal in law to $(\zeta^0_t(r))$ and independent of $(\zeta^i_t)$, and define a coupled process $({\zeta}^{i,0}_t(r)) \in \mathcal{Z}_0$ as:
	\[\begin{cases}
		\zeta^{i,0}_t(r) = \tilde{\zeta}^0_t(r) \qquad r\neq r_{i}\\
		\zeta^{i,0}_t(r_{i}) = \zeta^i_t.
	\end{cases}\]

	\noindent Here again, it is clear that this process is spatially chaotic, i.e. that for any $r\neq r'$, the processes $\zeta^{i,0}_t(r)$ and $\zeta^{i,0}_t(r')$ are independent, and that $\zeta^{i,0}_t(r)$  has the law of $\zeta^0_t(r)$. 

	\medskip

	Now that these processes have been constructed, we are in a position to define the process $(\bar{X}^i_t)$ as the unique solution of the mean-field equation~\eqref{eq:MFESpace}, driven by the spatially chaotic Brownian motion $(W_t^i(r))$ and with the spatially chaotic initial condition $(\zeta^{i,0}_t(r))$: 
	\begin{equation*}
		\left \{
		\begin{array}{lll}
				d\bar{X}^i_t(r) &= \displaystyle{f(r,t,\bar{X}^i_t(r))\,dt + \int_{\Gamma} J(r,r')\Exp_Z[b(\bar{X}^i_t(r),Z_{t-\tau(r,r')}(r'))]d\lambda(r')\,dt} \\
				& \displaystyle{\quad +\bar{J}\Exp_{Z}[b(X_t(r),Z_{t-\tau_s}(r))]\,dt+ \sigma(r)\, dW^i_t(r) \qquad \text{for } t\geq 0}\\
				\\
			\bar{X}^i_t(r) & = \zeta^{i,0}_t (r) \qquad \text{for }  t\in [-\tau, 0]\\
			\\
			(Z_t)&\eqlaw (\bar{X}^i_t) \in \M \quad \text{ independent of $(\bar{X}^i_t)$ and $(W^{i}_t(\cdot))$}.
		\end{array}
		\right .
	\end{equation*}
	The same procedure applied for all $j\in\N$ allows building a collection of independent stochastic processes $(\bar{X}^j_t(r))_{j=1\ldots N} \in \mathcal{Z}_{T}$. These are clearly independent of the configurations of the finite-size network.
	Let us denote by $m(t,r)$ the probability distribution of $\bar{X}_t(r)$ solution of the mean-field equation \eqref{eq:MFESpace}. As previously, the process $(Z_t(r))$ generically denotes a process belonging to $\mathcal{Z}_T$ and distributed as $m$. 

\subsection{Local (micro-circuit) averaging}
	We start by analyzing the local averaging property on the micro-circuit. This is the subject of the following lemma. 
	\begin{lemma}\label{lem:ConvMicro}
		There exists a positive constant $K_1$ such that, for any $N$ sufficiently large, averaged across all configurations $\A_N$:
		\[\E\left\{\Ex{\left \vert\frac{1}{v(N)} \sum_{j\in \V(i)} b(\bar{X}^i_t(r_i),\bar{X}^j_{t-\tau(r_i,r_j)}(r_j)) - \Exp_{Z}[b(\bar{X}^i_t(r_i),{Z}_{t-\tau_s}(r_i))]\right \vert}\right\} \leq K_1 \sqrt{\left({\frac{v(N)}{N}}\right)^{\frac{1}{d}}+\frac{1}{{v(N)}}} \]
	\end{lemma}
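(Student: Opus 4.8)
The plan is to freeze a network configuration $\A_N$, split the quantity inside the inner expectation into a centred \emph{fluctuation} term and a \emph{bias} term around the limit, control each separately, and then average over $\A_N$ using Proposition~\ref{lem:SizeMicro}. Once $\A_N$ is fixed, $\V(i)$ has exactly $v(N)$ elements (a.s.), $i\notin\V(i)$, and for every $j\in\V(i)$ one has $\vert r_j-r_i\vert\le d^N_m(i)$, hence also $\vert\tau(r_i,r_j)-\tau_s\vert=\vert r_i-r_j\vert/c\le d^N_m(i)/c$. Writing $x=\bar X^i_t(r_i)$, $s_j=t-\tau(r_i,r_j)$, and recalling $\Exp_Z[b(x,Z_s(r))]=\int_E b(x,y)\,m(s,r,dy)$ where $m(s,r)$ is the law of $\bar X_s(r)$, I would insert $\Exp_Z[b(x,Z_{s_j}(r_j))]$ as a pivot and write
\[
\frac1{v(N)}\sum_{j\in\V(i)}b(x,\bar X^j_{s_j}(r_j))-\Exp_Z[b(x,Z_{t-\tau_s}(r_i))]=\mathcal I_N+\mathcal J_N,
\]
with $\mathcal I_N=\frac1{v(N)}\sum_{j\in\V(i)}\bigl(b(x,\bar X^j_{s_j}(r_j))-\Exp_Z[b(x,Z_{s_j}(r_j))]\bigr)$ and $\mathcal J_N=\frac1{v(N)}\sum_{j\in\V(i)}\bigl(\Exp_Z[b(x,Z_{s_j}(r_j))]-\Exp_Z[b(x,Z_{t-\tau_s}(r_i))]\bigr)$.

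For the fluctuation term $\mathcal I_N$ I would use that, by construction, the processes $(\bar X^j)_{j}$ are mutually independent and independent of the configuration, so that conditionally on $\A_N$ and on $\bar X^i$ the summands of $\mathcal I_N$ are independent, bounded by $2\Vert b\Vert_\infty$ (assumption~\ref{Assump:LocLipschbSpace}), and centred: indeed $\bar X^j_{s_j}(r_j)$ has law $m(s_j,r_j)$ since $\bar X^j$ is independent of $(\A_N,\bar X^i)$ while $r_j,s_j$ are determined by $\A_N$, so its conditional mean is exactly $\Exp_Z[b(x,Z_{s_j}(r_j))]$ with $x$ fixed by the conditioning. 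Conditional Jensen together with orthogonality of the centred summands then gives $\Ex{\vert\mathcal I_N\vert\mid\A_N}\le 2\Vert b\Vert_\infty/\sqrt{v(N)}$, uniformly in $\A_N$ and in $x$.

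The term $\mathcal J_N$ is the crux. The subtle point is that one \emph{cannot} compare $\bar X^j_{s_j}(r_j)$ with $\bar X^j_{t-\tau_s}(r_i)$ pathwise: because the limiting field is spatially chaotic these are independent even though $r_j$ is close to $r_i$, so any pathwise bound would be $O(1)$; the comparison must be made at the level of the laws. Writing $\varphi=b(x,\cdot)$, which is $L$-Lipschitz by~\ref{Assump:LocLipschbSpace}, each summand of $\mathcal J_N$ equals $\Exp[\varphi(\bar X_{s_j}(r_j))]-\Exp[\varphi(\bar X_{t-\tau_s}(r_i))]$, and I would bound it via the triangle inequality through the intermediate law $m(s_j,r_i)$, using the regularity of the mean-field law established in Theorem~\ref{thm:ExistenceUniquenessSpace}: membership in $\Z_T^{\star}$ yields the time regularity $\Ex{\vert\bar X_s(r)-\bar X_{s'}(r)\vert}\le C(\sqrt{\vert s-s'\vert}+\vert s-s'\vert)$, and membership in $\Z_T$ yields, through the coupled process, that $r\mapsto\Exp[\varphi(\bar X_s(r))]$ is $1/2$-H\"older with modulus $LC(\vert r-r'\vert+\sqrt{\vert r-r'\vert})$. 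Combining these, using $\vert r_i-r_j\vert\le d^N_m(i)$, $\vert\tau(r_i,r_j)-\tau_s\vert\le d^N_m(i)/c$ and $d^N_m(i)\le d(\Gamma)$ (so the factors $1/c$, $1/\sqrt c$ are absorbed in constants), each summand is bounded by $C'\bigl(d^N_m(i)+\sqrt{d^N_m(i)}\bigr)$ uniformly over $j\in\V(i)$ and over $x$; hence $\vert\mathcal J_N\vert\le C'\bigl(d^N_m(i)+\sqrt{d^N_m(i)}\bigr)$. (If $t<\tau$ the times $s_j$ may be negative, but the same regularity holds on $[-\tau,T]$ since $\zeta^0\in\Z_0^{\star}\cap\Z_0^2$, so nothing changes.)

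Finally I would take $\E[\cdot]$ over configurations. Since the $\mathcal I_N$-bound is uniform in $\A_N$ and the $\mathcal J_N$-bound depends on $\A_N$ only through $d^N_m(i)$, and since replacing $x$ by the random $\bar X^i_t(r_i)$ is harmless (both bounds being uniform in $x$), one gets $\E\bigl\{\Ex{\vert\mathcal I_N+\mathcal J_N\vert}\bigr\}\le 2\Vert b\Vert_\infty/\sqrt{v(N)}+C'\bigl(\E[d^N_m(i)]+\E[\sqrt{d^N_m(i)}]\bigr)$. By Jensen $\E[\sqrt{d^N_m(i)}]\le\sqrt{\E[d^N_m(i)]}$, and Proposition~\ref{lem:SizeMicro} gives $\E[d^N_m(i)]\le C\varepsilon_N$ with $\varepsilon_N:=(v(N)/N)^{1/d}+1/v(N)\to0$; for $N$ large, $\varepsilon_N\le1$, so $\varepsilon_N\le\sqrt{\varepsilon_N}$ and $1/\sqrt{v(N)}\le\sqrt{\varepsilon_N}$, whence the whole expression is $\le K_1\sqrt{(v(N)/N)^{1/d}+1/v(N)}$ for a suitable constant $K_1$. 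The main obstacle is precisely the bias term $\mathcal J_N$: one must avoid comparing the shifted processes directly (the spatial chaos forbids it) and instead exploit the space- and time-regularity of the \emph{law} of the mean-field solution from Theorem~\ref{thm:ExistenceUniquenessSpace}, quantitatively matched against the collapse of the microcircuit from Proposition~\ref{lem:SizeMicro}.
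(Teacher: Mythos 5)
Your proof is correct and follows essentially the same route as the paper's: the same fluctuation/bias decomposition with the conditional mean $\Exp_Z[b(x,Z_{t-\tau_{ij}}(r_j))]$ as pivot, an $L^2$ (quadratic control) bound giving $O(1/\sqrt{v(N)})$ for the centred part, and the space--time regularity of the law from Theorem~\ref{thm:ExistenceUniquenessSpace} combined with Proposition~\ref{lem:SizeMicro} for the bias. If anything, your treatment is slightly more careful than the paper's in handling $\E[\sqrt{d^N_m(i)}]$ via Jensen and in making explicit why the comparison must be made in law rather than pathwise.
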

	\begin{proof}
		Conditioned on $r_i$, the set $(\bar{X}^j_{t-\tau_{ij}}(r_j))$ is a collection of independent identically distributed random variables. The map $x\mapsto b(\bar{X}^i_t,x)$ is Lipschitz continuous. Therefore, the regularity properties proved in theorem~\ref{thm:ExistenceUniquenessSpace} ensure that we have (in what follows, $K_1$ denotes a constant, independent of $N$, that may change from line to line):
		\begin{equation}\label{eq:contlemma}
			\vert \Exp_{{Z}}[b(\bar{X}^i_t(r),{Z}_{t-\tau_{ij}}(r_j)]-\Exp_{{Z}}[b(\bar{X}^i_t(r),{Z}_{t-\tau_{s}}(r_i)] \vert \leq K_1(\sqrt{\vert \tau_{ij}-\tau_s \vert} + \vert r_j-r_i\vert+ \sqrt{\vert r_j-r_i\vert})=K_1(\sqrt{d_{ij}}+d_{ij}).
		\end{equation}
		For almost any configuration $\A_N$ and any $j\in\V(i)$, we have seen that for $N$ sufficiently large, by application of proposition~\ref{lem:SizeMicro}, the distances $d_{ij}$ are small (of order $(v(N)/N)^{1/d}$). 
		Moreover, we have:
		\begin{align*}
			&\frac{1}{v(N)} \sum_{j\in \V(i)} b(\bar{X}^i_t(r_i),\bar{X}^j_{t-\tau(r_i,r_j)}(r_j)) - \Exp_{Z}[b(\bar{X}^i_t(r_i),{Z}_{t-\tau_s}(r_i))] \\
			&\quad = \frac{1}{v(N)} \sum_{j\in \V(i)}\left( b(\bar{X}^i_t(r_i),\bar{X}^j_{t-\tau(r_i,r_j)}(r_j)) - \E[\Exp_{Z}[b(\bar{X}^i_t(r_i),Z_{t-\tau_{ij}}(r_j))]]\right)\\
			&\qquad + \frac{1}{v(N)} \sum_{j\in \V(i)}\E[\Exp_{Z}[b(\bar{X}^i_t(r_i),Z_{t-\tau_{ij}}(r_j))]] -  \Exp_{Z}[b(\bar{X}^i_t(r_i),{Z}_{t-\tau_s}(r_i))]
		\end{align*}
		For any measurable function $F$, the quantity $\E[\Exp_{Z}[F(Z_{t-\tau_{ij}(r_j)})]]$ is precisely the average of the random variable $F(\bar{X}^j_{t-\tau_{ij}}(r_j))$. Therefore, a quadratic control argument (see e.g.~\cite[Theorem 1.4.]{sznitman:89}) allows to show that the first term is of order $1/\sqrt{v(N)}$, in the sense that:
		\[\E[\Exp[\frac{1}{v(N)} \sum_{j\in \V(i)}\left( b(\bar{X}^i_t(r_i),\bar{X}^j_{t-\tau(r_i,r_j)}(r_j)) - \E[\Exp_{Z}[b(\bar{X}^i_t(r_i),Z_{t-\tau_{ij}(r_j)})]]\right)]]\leq \frac{K_1}{\sqrt{v(N)}}.\]
	This argument consists in showing that the expectation of the square of the sum is of order $1/v(N)$, which is performed by showing that (i) the terms of the sum are centered (i.e. that the expectation term introduced -- which was chosen to this purpose-- is precisely the expectation with respect to $Z$ of $b(\bar{X}^i_t,Z)$ for $Z$ equal in law to $\bar{X}^j_{t-\tau_{ij}}(r_j)$ which all have the same law) and (ii) using Cauchy-Schwarz inequality to bound the term by the square root of the expectation of the squared sum, developing the square and showing that the number of null terms is bounded by some constant multiplied by $v(N)$. This argument is not developed here as it will be the core of the proof of lemma~\ref{lem:ConvMacro}.
	
	The second term is handled by using the control given by equation~\eqref{eq:contlemma} and the result of proposition~\ref{lem:SizeMicro}, ensuring that
		\begin{multline*}
			\E\Big[\Exp\big[\frac{1}{v(N)} \sum_{j\in \V(i)}\E[\Exp_{Z}[b(\bar{X}^i_t(r_i),Z_{t-\tau_{ij}(r_j)})]] -  \Exp_{Z}[b(\bar{X}^i_t(r_i),{Z}_{t-\tau_s}(r_i))]\big]\Big]\\\leq K_1\Bigg(\sqrt{\left({\frac{v(N)}{N}}\right)^{\frac{1}{d}}+\frac 1 {v(N)}}
			+\left({\frac{v(N)}{N}}\right)^{\frac{1}{d}}+\frac 1 {v(N)}\Bigg).
		\end{multline*}
		Put together, the two last estimates yield the desired result. 
	\end{proof}

\subsection{Continuous (macro-circuit) averaging}
\begin{lemma}\label{lem:ConvMacro}
	The coupled macroscopic interaction term converges towards a non-local mean-field term with speed $\frac{1}{\sqrt{N\beta(N)}}$, in the sense that there exists a constant $K_2>0$ independent of $N$ such that:
	\[\E\Big[\Exp\big[\left \vert\frac 1 {N\beta(N)} \sum_{j=1}^N J(r_i,r_j)\chi_{ij}b(\bar{X}^i_t(r_i),\bar{X}^j_{t-\tau_{ij}(r_j)}) - \int_{\Gamma} J(r_i,r) \Exp_{Z}[b(\bar{X}^i_t(r_i),Z_{t-\tau(r_i,r)}(r))]d\lambda(r)\right \vert\big]\Big]\leq \frac{K_2}{\sqrt{N\beta(N)}}.\]
\end{lemma}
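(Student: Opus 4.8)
The plan is to condition on the randomness attached to neuron $i$ and exploit that, conditionally, the summands form an i.i.d.\ family with exactly the right mean, so that the whole estimate collapses to a variance computation in which the Bernoulli thinning $\chi_{ij}$ produces the sparse rate $1/\sqrt{N\beta(N)}$. First I would introduce the $\sigma$-algebra $\mathcal{G}^i$ generated by $r_i$, $(\tilde{W}^i_t)$ and $(\zeta^i_t)$. Because the mean-field equation~\eqref{eq:MFESpace} has deterministic coefficients (it couples only to the deterministic law $m(t,r,\cdot)$), the coupling construction of the preceding subsection makes $\bar{X}^i_t(r_i)$ a $\mathcal{G}^i$-measurable random variable and makes the triples $(r_j,\chi_{ij},\bar{X}^j_\cdot(r_j))_{j\neq i}$ i.i.d.\ and independent of $\mathcal{G}^i$. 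Writing $\xi_j:=J(r_i,r_j)\,\chi_{ij}\,b(\bar{X}^i_t(r_i),\bar{X}^j_{t-\tau_{ij}}(r_j))$, the variables $(\xi_j)_{j\neq i}$ are then \emph{conditionally i.i.d.} given $\mathcal{G}^i$, with $|\xi_j|\leq\Vert J\Vert_\infty\Vert b\Vert_\infty$.

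Next I would identify the conditional mean: using $\Exp[\chi_{ij}]=\beta(N)$, integrating $r_j$ against $\lambda$, and noting that for $r_j=r$ the law of $\bar{X}^j_{t-\tau(r_i,r)}(r)$ is $m(t-\tau(r_i,r),r,\cdot)$, one obtains
\[\E\Exp\big[\xi_j\mid\mathcal{G}^i\big]=\beta(N)\int_{\Gamma}J(r_i,r)\,\Exp_Z\big[b(\bar{X}^i_t(r_i),Z_{t-\tau(r_i,r)}(r))\big]\,d\lambda(r),\]
the integral over $\Gamma$ being well defined thanks to the space-time regularity of the law proved in Theorem~\ref{thm:ExistenceUniquenessSpace}. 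Thus the expression inside the absolute value in the statement equals $\frac{1}{N\beta(N)}\sum_{j=1}^{N}\big(\xi_j-\E\Exp[\xi_j\mid\mathcal{G}^i]\big)$ up to the diagonal term $j=i$, which is bounded by $\Vert J\Vert_\infty\Vert b\Vert_\infty/(N\beta(N))=o(1/\sqrt{N\beta(N)})$ and hence absorbed into $K_2$.

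The core step is a conditional second-moment bound for $S:=\frac{1}{N\beta(N)}\sum_{j\neq i}\big(\xi_j-\E\Exp[\xi_j\mid\mathcal{G}^i]\big)$. By Cauchy--Schwarz, $\E\Exp[\,|S|\mid\mathcal{G}^i\,]\leq(\E\Exp[S^2\mid\mathcal{G}^i])^{1/2}$; by conditional independence and centering all cross terms vanish, so $\E\Exp[S^2\mid\mathcal{G}^i]=(N\beta(N))^{-2}\sum_{j\neq i}\mathrm{Var}(\xi_j\mid\mathcal{G}^i)$. Here comes the one genuinely multi-scale ingredient: since $\chi_{ij}^2=\chi_{ij}$, one has $\xi_j^2\leq\Vert J\Vert_\infty^2\Vert b\Vert_\infty^2\,\chi_{ij}$, hence $\mathrm{Var}(\xi_j\mid\mathcal{G}^i)\leq\E\Exp[\xi_j^2\mid\mathcal{G}^i]\leq\Vert J\Vert_\infty^2\Vert b\Vert_\infty^2\,\beta(N)$, so each term has variance of order $\beta(N)$ rather than order $1$. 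Summing the $N-1$ terms gives the deterministic bound $\E\Exp[S^2\mid\mathcal{G}^i]\leq\Vert J\Vert_\infty^2\Vert b\Vert_\infty^2/(N\beta(N))$, whence $\E\Exp[\,|S|\,]\leq\Vert J\Vert_\infty\Vert b\Vert_\infty/\sqrt{N\beta(N)}$; adding back the diagonal term yields the claim with, e.g., $K_2=2\Vert J\Vert_\infty\Vert b\Vert_\infty$ for $N$ large enough that $N\beta(N)\geq 1$.

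The main obstacle is not this computation but justifying the conditional i.i.d.\ structure on which it rests: although the chaotic Brownian motions $W^i$ and $W^j$ share the common field $(W_\cdot(r))$ for $r\notin\{r_i,r_j\}$, we only ever evaluate $\bar{X}^i$ at $r_i$ --- where it is driven by $(\tilde{W}^i,\zeta^i)$ --- and $\bar{X}^j$ at $r_j$ --- where it is driven by $(\tilde{W}^j,\zeta^j)$ --- and these data are mutually independent over neurons precisely because the mean-field coefficients are deterministic; one must also check that $\bar{X}^i_t(r_i)$ is indeed $\mathcal{G}^i$-measurable. Once this is carefully set up, the argument is exactly the ``quadratic control'' of~\cite[Theorem 1.4.]{sznitman:89} adapted to a sparse macro-circuit, the replacement of the classical $N^{-1/2}$ rate by $(N\beta(N))^{-1/2}$ being entirely due to the identity $\chi_{ij}^2=\chi_{ij}$ together with the vanishing connection probability $\beta(N)$.
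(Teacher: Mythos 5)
Your proposal is correct and follows essentially the same route as the paper: condition on the data attached to neuron $i$, observe that the summands are conditionally i.i.d.\ with conditional mean $\beta(N)\int_{\Gamma}J(r_i,r)\Exp_Z[b(\bar{X}^i_t(r_i),Z_{t-\tau(r_i,r)}(r))]\,d\lambda(r)$, pass from the first absolute moment to the second via Cauchy--Schwarz, and exploit $\chi_{ij}^2=\chi_{ij}$ to get a per-term variance of order $\beta(N)$ (equivalently, of order $1/\beta(N)$ for the paper's normalized $\Phi_{ij}$), yielding the rate $1/\sqrt{N\beta(N)}$. Your explicit treatment of the diagonal term $j=i$ and of the measurability/independence structure inherited from the coupling construction is a small amount of extra care that the paper leaves implicit.
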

\begin{proof}
	Conditioned on the location $r_i$ of neuron $i$, the collection of $\Omega\times \Omega'$-random variables $(r_j,\chi_{ij},\bar{X}^j_{t-\tau_{ij}}(r_j))$ are independent and identically distributed. The sum
	\[\frac 1 {N\beta(N)} \sum_{j=1}^N J(r_i,r_j)\chi_{ij}b(\bar{X}^i_t(r_i),\bar{X}^j_{t-\tau_{ij}}(r_j))\]
	is therefore, conditionally on $\bar{X}^i_t$ and $r_i$, the sum of independent and identically distributed processes, with finite mean and variance (since $b$ is a bounded function). The expectation of each term in the sum, conditionally on $\bar{X^i_t}$ and $r_i$, is equal to:
	\[\beta(N)\int_{\Gamma} J(r_i,r) \Exp_Z[b(\bar{X}^i_t(r_i),Z_{t-\tau(r_i,r)}(r))] d\lambda(r).\]
	Let us denote by
	\[\Phi_{ij}(x,y)=\frac{\chi_{ij}}{\beta(N)} J(r_i,r_j)  b(x,y) - \int_{\Gamma} J(r_i,r') \int_E b(x,z)m(t-\tau(r_i,r'),r',dz)\,d\lambda(r').\]
	The term under consideration is simply the empirical average $\frac 1 N \Phi_{ij}(\bar{X}^i_t,\bar{X}^j_{t-\tau_{ij}})$, and conditionally on $\bar{X}^i_t$ and $r_i$, the terms are independent, identically distributed, centered $(\Omega\times \Omega')$-random variables with second moment:
	\[\textrm{Var}[\Phi_{ij}\vert \bar{X}^i_{t},r_i]\leq \frac{1}{\beta(N)}M_2\]
	where $M_2$ is a finite constant independent of $N$, $r_i$ and $\bar{X}^i_t$. 
	Let us denote by $\hat{\Exp}_i[\cdot]$ the expectation on $\Omega\times \Omega'$ conditioned on $r_i$ and $\bar{X^i_t}$. We have:
	\begin{align*}
		\hat{\Exp}_i\Big[\big \vert\frac 1 N \sum_{j=1}^N \Phi_{ij}\big \vert\Big] &\leq \sqrt{\hat{\Exp}_i\Big[\big(\frac 1 N \sum_{j=1}^N \Phi_{ij}\big)^2\Big]}\\
		&\leq \frac 1 N \sqrt{\sum_{j=1}^N \hat{\Exp}_i\Big[\Phi_{ij}^2\Big]} \leq \sqrt{\frac{M_2}{N\beta(N)}}.
	\end{align*}
	Thanks to the fact that $M_2$ is independent of $\bar{X^i_t}$, we conclude that:
	\[\E\Big[\Exp\big[ \vert \frac 1 N \sum \Phi_{ij} \vert\big] \Big] = \Exp\Big[\hat{\Exp}_i\big[\vert \frac 1 N \sum_{j=1}^N \Phi_{ij}\vert\big]\Big\vert r_i\Big]\leq \sqrt{\frac{M_2}{N\beta(N)}}\]
\end{proof}
\subsection{The multiscale convergence result}
Now that we have analyzed the local and macroscopic interaction terms defined with the coupled processes, we are in a position to demonstrate our main result, namely the full multiscale convergence result. 
	\begin{theorem}\label{thm:PropagationChaosSpace}
	Let $i\in \N$ a fixed neuron in the network. Under the assumptions~\ref{Assump:LocLipschSpace}-\ref{Assump:SpaceContinuity}, for almost all configuration $\A_N$ of the neuron locations $(r_{i}, i\in\N)$ and connectivity links $(\chi_{ij},(i,j)\in\N^2)$, the process $(X^{i,\A_N}_t, t\leq T)$ solution of the network equations \eqref{eq:NetworkSpace} converges in law towards the process $(\bar{X}_t(r_{i}), t\leq T)$ solution of the mean-field equations \eqref{eq:MFESpace} with initial condition $(\zeta^0_t(r))$ and moreover, the speed of convergence is given by:
		\begin{equation}\label{eq:PropchaosSpace}
			\max_{i=1\cdots N}\mathcal{E}\left( \Exp\Big [\sup_{-\tau\leq s\leq T} \vert X^{i,\A_N}_s - \bar{X}^i_s(r_{i})\vert \Big]\right) = O\left(\sqrt{\left(\frac{v(N)}{N}\right)^{\frac 1 {d}}+\frac 1 {{v(N)}}}+\frac 1 {\sqrt{N\beta(N)}}\right)
		\end{equation}
	\end{theorem}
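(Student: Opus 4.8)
The plan is to run the non-classical coupling already set up above: for each neuron $i$ we have built the mean-field solution $(\bar X^i_t(r))$ of~\eqref{eq:MFESpace} driven by the chaotic Brownian motion $(W^i_t(r))$ that coincides with $(\tilde W^i_t)$ at $r=r_i$, and started from $(\zeta^{i,0}_t(r))$ that coincides with $(\zeta^i_t)$ at $r=r_i$. Writing both the network equation~\eqref{eq:NetworkSpace} and the mean-field equation~\eqref{eq:MFESpace} evaluated at $r=r_i$ in integral form and subtracting, the diffusion terms $\int_0^t\sigma(r_i)\,d\tilde W^i_s$ cancel and the two paths agree on $[-\tau,0]$. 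Setting $\delta^i_t=\sup_{-\tau\le s\le t}\vert X^{i,\A_N}_s-\bar X^i_s(r_i)\vert$, I would dominate $\delta^i_t$ by a sum of five time-integrated terms: (a) the intrinsic mismatch $\int_0^t\vert f(r_i,s,X^{i,\A_N}_s)-f(r_i,s,\bar X^i_s(r_i))\vert\,ds$; (b) the micro-circuit coupling error $\frac{\vert\bar J\vert}{v(N)}\int_0^t\sum_{j\in\V(i)}\vert b(X^{i,\A_N}_s,X^{j,\A_N}_{s-\tau_{ij}})-b(\bar X^i_s(r_i),\bar X^j_{s-\tau_{ij}}(r_j))\vert\,ds$; (c) the micro-circuit averaging error, whose integrand is exactly the quantity estimated in Lemma~\ref{lem:ConvMicro}; (d) the macro-circuit coupling error $\int_0^t\sum_{j=1}^N\vert M_{ij}\vert\,\vert b(X^{i,\A_N}_s,X^{j,\A_N}_{s-\tau_{ij}})-b(\bar X^i_s(r_i),\bar X^j_{s-\tau_{ij}}(r_j))\vert\,ds$; and (e) the macro-circuit averaging error, whose integrand is exactly the quantity estimated in Lemma~\ref{lem:ConvMacro}.

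After taking the joint expectation $\E\Exp[\cdot]$ and exchanging it with the time integral, terms (c) and (e) are bounded directly by Lemmas~\ref{lem:ConvMicro} and~\ref{lem:ConvMacro}, contributing $T\vert\bar J\vert K_1\sqrt{(v(N)/N)^{1/d}+1/v(N)}$ and $TK_2/\sqrt{N\beta(N)}$ respectively. Term (a) is handled by the Lipschitz assumption~\ref{Assump:LocLipschSpace}, giving $K_f\int_0^t\E\Exp[\delta^i_s]\,ds$. For (b) and (d) I use the Lipschitz and boundedness assumption~\ref{Assump:LocLipschbSpace} together with the monotonicity of $s\mapsto\delta^i_s$, which absorbs the delays since $s-\tau_{ij}\le s$: then (b) is at most $\frac{\vert\bar J\vert L}{v(N)}\int_0^t\sum_{j\in\V(i)}(\delta^i_s+\delta^j_s)\,ds$ and (d) at most $\frac{L\Vert J\Vert_\infty}{N\beta(N)}\int_0^t\big(n_i(\A_N)\,\delta^i_s+\sum_{j}\chi_{ij}\,\delta^j_s\big)\,ds$, with $n_i(\A_N)=\sum_j\chi_{ij}$.

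To close the argument I would set $\eta(t)=\E\Exp[\delta^1_t]$, which by exchangeability of the neuron labels under $\Pc\otimes\P$ equals $\E\Exp[\delta^i_t]$ for every $i$ and hence equals the maximum appearing in~\eqref{eq:PropchaosSpace}. The $\delta^i_s$-terms of (a), (b), (d) contribute $\le C\int_0^t\eta(s)\,ds$ once $n_i(\A_N)/(N\beta(N))$ is controlled, which is the role of Lemma~\ref{lem:SumChi}. For the cross terms I would use, on the micro side, the elementary geometric fact that in a $v(N)$-nearest-neighbour graph in $\R^d$ every vertex lies in at most $c_d\,v(N)$ microcircuits, so that $\E\Exp[\sum_{j\in\V(i)}\delta^j_s]\le c_d\,v(N)\,\eta(s)$; and on the macro side, exchangeability together with Lemma~\ref{lem:SumChi} applied to $\sum_i\chi_{ij}$, so that $\E\Exp[\sum_j\chi_{ij}\,\delta^j_s]\le C\,N\beta(N)\,\eta(s)$. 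Summing (a)--(e) gives $\eta(t)\le C\int_0^t\eta(s)\,ds+R(N)$ with $R(N)=C\big(\sqrt{(v(N)/N)^{1/d}+1/v(N)}+1/\sqrt{N\beta(N)}\big)$, and Gronwall's lemma yields $\eta(T)\le R(N)e^{CT}$, which is~\eqref{eq:PropchaosSpace}. Quenched convergence in law along subsequences then follows: the annealed bound forces $\Exp[\sup_{s\le T}\vert X^{i,\A_N}_s-\bar X^i_s(r_i)\vert]\to0$ in $\Pc$-probability, hence $\Pc$-a.s. along a subsequence, while $\bar X^i_\cdot(r_i)$ has conditional law $m(\cdot,r_i)$ given $r_i$.

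The main obstacle is term (d): the macroscopic coupling error carries the empirical degrees $\sum_j\chi_{ij}$, which are of order $N\beta(N)$ on average but can be as large as $N$, so the scaling $1/(N\beta(N))$ does not give an almost-sure bound and the naive $L^2$ estimate diverges with $N$ --- the phenomenon flagged after Proposition~\ref{pro:ExistenceUniquenessNetwork}. Taming it requires the Chernoff--Hoeffding-type concentration of Lemma~\ref{lem:SumChi}: on the typical event the degrees are $O(N\beta(N))$, and on the complementary event, whose probability is exponentially small in $N\beta(N)\to\infty$, a Cauchy--Schwarz/truncation estimate shows the (possibly large) contribution is negligible. Throughout, the space-dependent delays prevent working with the increments directly and force the use of the monotone majorants $\delta^i_s$; apart from this and the degree control, the scheme is the classical Sznitman coupling argument adapted to the two-scale, spatially extended setting.
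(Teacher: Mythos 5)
Your proposal is correct and follows essentially the same route as the paper: the same five-term decomposition $A$--$E$ of $X^{i,\A_N}_t-\bar X^i_t(r_i)$ via the coupled mean-field processes, Lemmas~\ref{lem:ConvMicro} and~\ref{lem:ConvMacro} for the two averaging terms, the Chernoff--Hoeffding estimate of Lemma~\ref{lem:SumChi} with the truncation on $\mathcal{D}_\gamma$ to tame the random degrees in the macroscopic coupling term, and Gronwall to close. Your use of exchangeability (plus the nearest-neighbour counting bound) in place of the paper's $\max_{i}$ Gronwall quantity is a minor, and if anything slightly more careful, variation on the same argument.
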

	\begin{remark}
	The notation $\mathcal{E}$ denotes the expectation on $(\Omega',\F',\Pc)$ the distribution of network configurations $\A_N$, i.e. space locations $(r_i)_{i\in\N}$ and connectivity links $(\chi_{ij})_{i,j \in \N}$. The expectation $\E[\Exp[\cdot]]$ is therefore the global expectation, i.e. on $(\Omega\times\Omega',\F\otimes \F',\P\otimes \Pc)$. The result shows that the expectation tends to zero. This implies quenched convergence (i.e. for almost all configuration $\A_N$) along subsequences. In detail, the speed of converge $S(N)$ announced in the theorem (on the righthand side of equation~\eqref{eq:PropchaosSpace}) allows to define subsequences (i.e. a sequence of network size $N_n=\varphi(n)$) for which we have almost sure convergence. These sequences are such that Borel-Cantelli lemma can be applied, namely subsequences extracted through a strictly increasing application $\varphi:\N\mapsto \N$ such that $S(\varphi(N))$ is summable. 
	\end{remark}
	
	We prepare for the proof by demonstrating the following fine estimate that will be used to control configurations with more links than the expected value:
		\begin{lemma}\label{lem:SumChi}
			Under our assumptions, for any $i\in\{1,\cdots,N\}$ and $\gamma>1$, we have for $N$ sufficiently large:
			\[\E\left(\frac 1 {N\beta(N)} \sum_{j=1}^N \chi_{ij} \ind {\mathcal{D}_{\gamma}}\right) \leq \frac 1 {\beta(N)} \exp\left(-\frac 1 2 \big(\gamma\log(\gamma)+1-\gamma\big)N\beta(N)\right)\]
			where $\mathcal{D}_{\gamma}=\{\omega'\in\Omega' ; \sum_{j} \chi_{ij} > \gamma N \beta(N)\}$. 
			\end{lemma}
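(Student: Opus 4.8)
The plan is to apply a Chernoff bound to the sum $S_i = \sum_{j=1}^N \chi_{ij}$, which, conditionally on nothing (the $\chi_{ij}$ are i.i.d.\ Bernoulli$(\beta(N))$ for fixed $i$), is a Binomial$(N,\beta(N))$ random variable with mean $\mu := N\beta(N)$. First I would rewrite the quantity of interest: since $\ind{\mathcal{D}_\gamma}$ depends only on $S_i$ and equals $1$ precisely when $S_i > \gamma\mu$, we have
\[
\E\!\left(\frac{1}{N\beta(N)}\sum_{j=1}^N \chi_{ij}\ind{\mathcal{D}_\gamma}\right) = \frac{1}{\mu}\,\E\!\left[S_i\,\ind{S_i > \gamma\mu}\right].
\]
The main obstacle is that the factor $S_i$ inside the expectation is unbounded (it can be as large as $N$), so a naive tail bound $\P(\mathcal{D}_\gamma)$ alone does not immediately control $\E[S_i\ind{\mathcal{D}_\gamma}]$ with the clean factor $1/\beta(N)$ claimed. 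I would handle this by one of two routes: either (a) bound $S_i \le N$ crudely, giving $\E[S_i\ind{\mathcal{D}_\gamma}] \le N\,\P(S_i>\gamma\mu)$, hence $\frac{1}{\mu}\E[S_i\ind{\mathcal{D}_\gamma}] \le \frac{N}{\mu}\P(S_i>\gamma\mu) = \frac{1}{\beta(N)}\P(S_i>\gamma\mu)$, which already produces exactly the prefactor $1/\beta(N)$; or (b) more carefully, split $\{S_i > \gamma\mu\}$ into dyadic shells $\{2^k\gamma\mu < S_i \le 2^{k+1}\gamma\mu\}$ and sum geometrically decaying Chernoff tails — but route (a) is cleaner and suffices, so I would take it.

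Having reduced to bounding $\P(S_i > \gamma\mu)$, I would invoke the Chernoff--Hoeffding estimate for the upper tail of a Binomial (cited as~\cite{hoeffding1963probability} in the paper): for $\gamma>1$,
\[
\P(S_i \ge \gamma\mu) \le \exp\!\bigl(-\mu\,\bigl(\gamma\log\gamma + 1 - \gamma\bigr)\bigr),
\]
where $\gamma\log\gamma + 1 - \gamma = \mathrm{KL}(\gamma\beta(N)\,\|\,\beta(N))/\beta(N)$ is the standard relative-entropy rate function, which is strictly positive for $\gamma>1$. Combining with route (a) gives
\[
\E\!\left(\frac{1}{N\beta(N)}\sum_{j=1}^N \chi_{ij}\ind{\mathcal{D}_\gamma}\right) \le \frac{1}{\beta(N)}\exp\!\bigl(-(\gamma\log\gamma+1-\gamma)\,N\beta(N)\bigr).
\]

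The only gap between this and the stated inequality is the factor $\tfrac12$ in the exponent: the lemma claims the weaker bound with $\tfrac12(\gamma\log\gamma+1-\gamma)N\beta(N)$ in the exponent. Since $\exp(-a) \le \exp(-a/2)$ for $a\ge 0$, the sharper bound I derived immediately implies the claimed one, so the $\tfrac12$ is simply slack that the author carries for convenience (perhaps to absorb constants or to match a cruder form of the Chernoff bound, e.g.\ the Bernstein-type estimate $\P(S_i \ge (1+\delta)\mu) \le \exp(-\mu\delta^2/(2+\delta))$ with $\delta=\gamma-1$; one checks $\delta^2/(2+\delta) \ge \tfrac12(\gamma\log\gamma+1-\gamma)$ for $\gamma$ in the relevant range, or invokes it only for $N$ large). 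I would state it in the form the author uses. The phrase "for $N$ sufficiently large" enters only if one uses a version of the Chernoff bound valid asymptotically, or to ensure $\gamma N\beta(N)$ is in a regime where the chosen concentration inequality applies; given $N\beta(N)\to\infty$, this is harmless. The expectation here is over $(\Omega',\F',\Pc)$ only, consistent with the notation $\E$ in the lemma.
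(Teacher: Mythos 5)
Your proposal is correct and follows essentially the same route as the paper: a Chernoff--Hoeffding tail bound on the binomial $S_N=\sum_j\chi_{ij}$ combined with the crude bound $S_N\le N$, which is exactly how the paper produces the prefactor $1/\beta(N)$. The only cosmetic difference is the origin of the factor $\tfrac12$: you obtain the clean rate $\gamma\log\gamma+1-\gamma$ directly (so the $\tfrac12$ is pure slack for you), whereas the paper starts from the full two-term Hoeffding exponent, Taylor-expands the $\bigl(1-\gamma\beta(N)\bigr)\log\bigl((1-\beta(N))/(1-\gamma\beta(N))\bigr)$ term using $\beta(N)\to 0$, and uses the $\tfrac12$ to absorb the resulting $O(N\beta(N)^2)$ correction --- which is also why the paper genuinely needs ``$N$ sufficiently large.''
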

			\begin{proof}
				In order to demonstrate the result, we make use of Chernoff-Hoeffding theorem~\cite{hoeffding1963probability} controlling the deviations from the mean of Bernoulli random variables\footnote{The theorem states that the sum of iid Bernoulli random variables $X_j$ with fixed mean $p$, is such that, for any $\varepsilon>0$,
				\[\P\left[\sum_{j=1}^m X_j > p+\varepsilon\right]\leq \left(\left(\frac p {p+\varepsilon}\right)^{p+\varepsilon}\left(\frac {1-p} {1-p-\varepsilon}\right)^{1-p-\varepsilon}\right)^m.\] 
			}. Let us denote by $S_N=\sum_{j=1}^N\chi_{ij}$. This is a binomial variable of parameters $(N,\beta(N))$. Chernoff-Hoeffding theorem ensures that:
	\begin{align*}
		\P\Big[S_N\geq \gamma N \beta(N)\Big] &\leq \left(\left(\frac{\beta(N)}{\gamma\beta(N)}\right)^{\gamma \beta(N)}\left(\frac{1-\beta(N)}{1-\gamma\beta(N)}\right)^{1-\gamma \beta(N)}\right)^N\\
		&\leq \exp\left(-\gamma \log(\gamma) N \beta(N) + N\big(1-\gamma\beta(N)\big)\Big(\log\big(1-\beta(N)\big)-\log\big(1-\gamma\beta(N)\big)\Big)\right).
	\end{align*}
	Using a Taylor expansion of the logarithmic terms for large $N$ (using the fact that $\beta(N)$ tends to zero at infinity), it is easy to obtain 
	\[\P\Big[S_N\geq \gamma N \beta(N)\Big] \leq \exp\left( (-\gamma\log(\gamma)+\gamma-1) N\beta(N) + O(N\beta(N)^2)\right)\]
	Note that for $\gamma>1$, the quantity $(\gamma\log(\gamma)-\gamma+1)$ is strictly positive. We therefore have, for $N$ sufficiently large, that the probability is bounded by:
	\[\P\Big[S_N\geq \gamma N \beta(N)\Big] \leq \exp\left( -\frac 1 2 (\gamma\log(\gamma)-\gamma+1) N\beta(N))\right).\]
	 This allows to conclude the lemma as follows. It is clear by definition that $S_N\leq N$. Therefore,
	\begin{align*}
		\E\left(\frac 1 {N\beta(N)} \sum_{j=1}^N \chi_{ij} \ind {\mathcal{D}_{\gamma}}\right) & \leq \frac 1 {\beta(N)}\P\Big[S_N\geq \gamma N \beta(N)\Big]
	\end{align*} 
	yielding the desired result.
			\end{proof}
			We are now in a position to perform the proof of theorem~\ref{thm:PropagationChaosSpace}.
	\begin{proof}
	The proof is based on evaluating the distance $\Exp [\sup_{-\tau \leq s\leq T} \vert X^{i,N}_s - \bar{X}^i_s(r_i)\vert^2 ]$, and breaking it into a few elementary, easily controllable terms. A substantial difference with usual mean-field proofs is that network equations correspond to processes taking values in $E$ in which the interaction term is sum over a finite number of neurons in the network equation, while the mean-field equation is a spatially extended equation with an effective interaction term involving an integral over $\Gamma$. This will be handled using the result of lemma~\ref{lem:ConvMacro}.

We introduce in the distance coupled interaction terms that were controlled in lemmas~\ref{lem:ConvMicro} and~\ref{lem:ConvMacro} and obtain the following elementary decomposition (each line of the righthand side corresponds to one term of the decomposition, $A_t(N)-E_t(N)$):
			\begin{align*}
				\nonumber &X^{i,\A_N}_t-\bar{X}^i_t(r_{i}) = \int_0^t (f(r_{i},s,X^{i,\A_N}_s)-f(r_{i},s,\bar{X}^i_s(r_{i}))) \, ds \\
				 &\quad +  \; \frac{\bar{J}}{v(N)} \sum_{j\in \V(i)} \int_0^t \Big(b(X^{i,\A_N}_s,X^{j,\A_N}_{s-\tau(r_{i},r_{j})}) -b(\bar{X}^i_s(r_{i}),\bar{X}^j_{s-\tau(r_{i},r_{j})}(r_j))\Big) \,ds\\
				&\quad  + \int_0^t \Big(\frac{\bar{J}}{v(N)} \sum_{j\in\V(i)} b(\bar{X}^i_s(r_{i}),\bar{X}^j_{s-\tau(r_{i},r_{j})}(r_j)) - \Exp_Z[ b(\bar{X}^i_s(r_{i}),Z_{s-\tau_s}(r_{j}))]\Big)\, ds\\
				 &\quad + \frac{1}{N\beta(N)} \sum_{j=1}^{N} \int_0^t J(r_i,r_j)\chi_{ij} \Big(b(X^{i,\A_N}_s,X^{j,\A_N}_{s-\tau_{ij}})-b(\bar{X}^{i}_s(r_{i}),\bar{X}^{j}_{s-\tau_{ij}}(r_{j})) \Big)\,ds\\
				 &\quad + \int_0^t \Big(\frac{1}{N\beta(N)} \sum_{j=1}^{N} J(r_i,r_j)\chi_{ij} b(\bar{X}^{i}_s(r_{i}),\bar{X}^{j}_{s-\tau_{ij}}(r_{j})) -\int_{\Gamma} J(r_i,r') \Exp_Z[b(\bar{X}^i_s(r_{i}),Z_{s-\tau(r_{i},r')}(r'))]d\lambda(r')\Big)\,ds\\		
				 &\nonumber \qquad =: A^i_t(N)+B^i_t(N)+C^i_t(N)+D^i_t(N)+E^i_t(N)
			\end{align*}
			It is easy to show, using assumptions~\ref{Assump:LocLipschSpace} and~\ref{Assump:LocLipschbSpace}, that the terms $A_t^i(N)$ and $B_t^i(N)$ satisfy the inequalities:
			\begin{align*}
					\Exp\Big[\sup_{-\tau\leq s\leq t} \vert A^i_s(N) \vert\Big] & \leq K_f\,\int_0^{t} \Exp\Big[\sup_{-\tau\leq u\leq s} \vert X_u^{i,\A_N}-\bar{X}_u^i(r_{i})\vert \Big]\, ds\\
					\max_{i=1\cdots N}\mathcal{E}\Big[\Exp\Big[\sup_{-\tau\leq s\leq t} \vert B_s^i(N) \vert\Big]\Big] & \leq \frac{v(N)+1}{v(N)} L\, \int_0^{t}  \max_{k=1\cdots N}\E\Big[\Exp\Big[\sup_{-\tau\leq u\leq s}\vert X^{k,\A_N}_u-\bar{X}^k_u(r_{i}) \vert\Big]\Big] \, ds
			\end{align*}
			The term $D^i_t$ requires to be handled with care, because of the sparsity of the macrocircuit. Indeed, this term involves the sum of $N$ random variables and is rescaled by $1/N\beta(N)$. As we assumed $\beta(N)\to 0$ in order to account for the sparsity in the macrocircuit, most terms in the sum are equal to zero\footnote{Non-zero terms correspond exactly to the neurons $j$ such that $\chi_{ij}=1$}. We have:
			\[\Exp\Big[\sup_{-\tau\leq s \leq t} \vert D^i_s \vert \Big]\leq \Vert J\Vert_{\infty} \frac 1 {N\beta(N)} \sum_{j} \chi_{ij} \int_0^t \Exp\big[\sup_{0 \leq u \leq s} \vert b(X^{i,\A_N}_u, X^{j,\A_N}_{u-\tau_{ik}})-b(\bar{X}^i_u,\bar{X}^j_{u-\tau_{ik}}) \vert \big]\,ds\]
			This expression shows how critical the singular sparse coupling is to our estimates. Indeed, the random variable $\frac 1 {N\beta(N)} \sum_{j} \chi_{ij}$ almost surely tends to $1$ as $N$ goes to infinity, but it can reach very large values (up to $1/\beta(N)$ which diverges as $N$ goes to infinity). Configurations $\A_N$ for which the sum is large are increasingly improbable, but for these configurations, the deterministic scaling $1/(N\beta(N))$ is not fast enough to overcome the divergence of the input term. There is therefore a competition between the probability of having configurations with large values of $\sum_{j}\chi_{ij}$ and the divergence of the solutions. However, in the present case, this control will be possible using the estimate of the probability that the number of links exceeds $\gamma N \beta(N)$ using the result of lemma~\ref{lem:SumChi}. Indeed, fixing $\gamma>1$, and distinguishing whether $\sum_j\chi_{ij}\leq \gamma N \beta(N)$ or not, we obtain:
			\begin{align*}
				\E\Big[ \Exp\Big[\sup_{-\tau\leq s \leq t} \vert D^i_s \vert \Big]\Big] &\leq  2 \gamma \; L \; \Vert J \Vert_{\infty} \int_0^t \max_{k=1\cdots N}\E\Big[ \Exp\big[\sup_{-\tau \leq u \leq s} \vert X^{k,\A_N}_{u}-\bar{X}^k_{u} \vert \big]\Big]\,ds\\
			& \qquad + 2 \Vert b\Vert_{\infty}\Vert J\Vert_{\infty} \E\left(\frac 1 {N\beta(N)} \sum_{j} \chi_{ij} \ind {\mathcal{D}_{\gamma}}\right) 
			\end{align*}
			where $\mathcal{D}_{\gamma}=\{\omega'\in\Omega' ; \sum_{j} \chi_{ij} > \gamma N \beta(N)\}$ as defined in lemma~\ref{lem:SumChi}. By application of this lemma, and using the fact that the second term of the upper bound is negligible compared to $\frac 1 {\sqrt{N\beta(N)}}$, we conclude that:
			\begin{align*}
				\max_{k=1\cdots N}\E\Big[ \Exp\Big[\sup_{-\tau\leq s \leq t} \vert D^k_s \vert \Big]\Big] &\leq  2 \gamma \; L \; \Vert J \Vert_{\infty} \int_0^t \max_{k=1\cdots N}\E\Big[ \Exp\big[\sup_{-\tau \leq u \leq s} \vert X^{k,\A_N}_{u}-\bar{X}^k_{u} \vert \big]\,ds
+ \frac{K_c}{\sqrt{N\beta{(N)}}}. 
			\end{align*}
			
			We are left with controlling the terms $C^i_t$ and $E^i_t$. These consist of sums only involving the coupled processes, and were analyzed in the previous sections. By direct application of the results of lemmas~\ref{lem:ConvMicro} and~\ref{lem:ConvMacro}, we have:
			\[\begin{cases}
				\displaystyle{\max_{k=1\cdots N}\E\Big[ \Exp\Big[\sup_{-\tau\leq s \leq t} \vert C^k_t \vert \Big]\Big]} &\leq \displaystyle{K_1\sqrt{\left(\frac{v(N)}{N}\right)^{\frac{1}{d}}+\frac 1 {{v(N)}}}}\\
				\displaystyle{\max_{k=1\cdots N}\E\Big[ \Exp\Big[\sup_{-\tau\leq s \leq t} \vert E^k_t \vert \Big]\Big]} &\leq \displaystyle{\frac{K_2}{\sqrt{N\beta{(N)}}}}
			\end{cases}\]

			All together, we hence have, for some constants $\tilde{K}>0$ and $\tilde{K}'>0$ independent of $N$ 
			\[M_t=\max_{i=1\cdots N}\mathcal{E}\left( \Exp\Big [\sup_{-\tau\leq s\leq t} \vert X^{i,N}_s - \bar{X}^i_s(r_{i})\vert \Big]\right),\]
			the inequality:
			\[M_t\leq \tilde{K}\int_0^t M_s \,ds + \tilde{K}'\left(\sqrt{\left(\frac{v(N)}{N}\right)^{\frac{1}{d}}+\frac 1 {{v(N)}}} + \frac{1}{\sqrt{N\beta{(N)}}}\right)\]
			which proves the theorem by application of Gronwall's lemma. 
	\end{proof}

	\begin{corollary}\label{cor:PropaChaosSpace}
		Let $l\in \N^*$ and fix $l$ neurons $(i_1,\cdots,i_l) \in \N^*$. Under the assumptions of theorem~\ref{thm:PropagationChaosSpace}, the process $(X^{i_1,\A_N}_t, \cdots, X^{i_l,\A_N}_t, -\tau\leq t \leq T)$ converges in law towards $m_t{(r_{p(i_1)})}\otimes \cdots \otimes m_t{(r_{p(i_l)})}$. 
	\end{corollary}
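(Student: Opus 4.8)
The plan is to obtain the corollary from the single-neuron estimate \eqref{eq:PropchaosSpace} of Theorem~\ref{thm:PropagationChaosSpace} together with the independence of the auxiliary (coupled) processes constructed in Section~\ref{sec:PropaChaSpace}; no new analytic estimate is required, only bookkeeping.

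\textbf{Step 1 (from one neuron to $l$ neurons).} For the $l$ fixed labels $i_1,\dots,i_l$, the elementary inequality $\max_k(\cdot)\le\sum_k(\cdot)$ gives
\[
\mathcal{E}\Big[\Exp\Big[\max_{1\le k\le l}\;\sup_{-\tau\le s\le T}\big\vert X^{i_k,\A_N}_s-\bar X^{i_k}_s(r_{i_k})\big\vert\Big]\Big]\;\le\;\sum_{k=1}^l \mathcal{E}\Big[\Exp\big[\sup_{-\tau\le s\le T}\vert X^{i_k,\A_N}_s-\bar X^{i_k}_s(r_{i_k})\vert\big]\Big],
\]
and each summand is $O\big(\sqrt{(v(N)/N)^{1/d}+1/v(N)}+1/\sqrt{N\beta(N)}\big)$ by \eqref{eq:PropchaosSpace}, hence tends to $0$. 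By Markov's inequality, the $E^l$-valued path $(X^{i_1,\A_N}_t,\dots,X^{i_l,\A_N}_t)_{-\tau\le t\le T}$ differs from $(\bar X^{i_1}_t(r_{i_1}),\dots,\bar X^{i_l}_t(r_{i_l}))_{-\tau\le t\le T}$ by a quantity converging to $0$ in $\P\otimes\Pc$-probability for the uniform-in-time norm; since the latter vector does not depend on $N$, it suffices to identify its law.

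\textbf{Step 2 (law of the coupled vector).} By construction and by the uniqueness statement of Theorem~\ref{thm:ExistenceUniquenessSpace}, evaluated at the location $r_{i_k}$ the mean-field equation \eqref{eq:MFESpace} driving $\bar X^{i_k}$ reduces to a closed McKean--Vlasov SDE with delays, driven by the Brownian motion $\tilde W^{i_k}$, with initial datum $\zeta^{i_k}$, and with the \emph{deterministic} law $m$ of the mean-field solution frozen in its interaction terms; in particular no coupling to the data of the other neurons is introduced through the interaction. Hence $(\bar X^{i_k}_t(r_{i_k}))_t$ is a measurable functional of the triple $(r_{i_k},\tilde W^{i_k},\zeta^{i_k})$ only. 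As these triples are independent over $k=1,\dots,l$, and since conditionally on $\{r_{i_k}=r\}$ the process $(\bar X^{i_k}_t(r))_t$ has law $m_t(r)$, the coupled vector has, conditionally on the locations $(r_{i_1},\dots,r_{i_l})$, the law $m_t(r_{i_1})\otimes\cdots\otimes m_t(r_{i_l})$.

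\textbf{Conclusion and main obstacle.} Combining the two steps, for $\Pc$-almost every realization of the locations the conditional law of $(X^{i_1,\A_N}_t,\dots,X^{i_l,\A_N}_t)_{-\tau\le t\le T}$ given $\A_N$ converges weakly to $m_t(r_{i_1})\otimes\cdots\otimes m_t(r_{i_l})$ along the subsequences $N_n=\varphi(n)$ for which the rate in \eqref{eq:PropchaosSpace} is summable, exactly as in the remark following Theorem~\ref{thm:PropagationChaosSpace} (Borel--Cantelli), while the annealed weak convergence of the unconditional law towards the mixture $\int_{\Gamma^l} m_t(r^1)\otimes\cdots\otimes m_t(r^l)\,d\lambda(r^1)\cdots d\lambda(r^l)$ holds along the full sequence. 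I do not expect a genuine obstacle: the only delicate point is the measurability/independence assertion of Step~2, which holds precisely because the nonlinearity of \eqref{eq:MFESpace} is frozen to the deterministic law $m$, so that distinct coupled neurons interact only through $m$ and never directly with one another.
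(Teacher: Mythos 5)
Your proposal is correct and follows essentially the same route as the paper: the paper likewise reduces the $l$-neuron statement to the single-neuron estimate of Theorem~\ref{thm:PropagationChaosSpace} via the elementary bound of the vector distance by (a multiple of) the maximal coordinate-wise distance, and then identifies the limit law as the product $m(t,r_{i_1})\otimes\cdots\otimes m(t,r_{i_l})$ because the coupled processes $\bar X^{i_k}$ are independent by construction. Your Step~2 merely spells out the independence argument that the paper invokes as being true ``by definition'' of the coupling, so no further comparison is needed.
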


	\begin{proof}
		We have:
		\begin{align*}
			&\mathcal{E}\left(\Exp \left[ \sup_{-\tau\leq t \leq T} \left\vert (X^{i_1,\A_N}_t, \cdots, X^{i_l,\A_N}_t) - (\bar{X}^{i_1}_t, \cdots, \bar{X}^{i_l}_t)\right\vert^2 \right]\right) \\
			&\quad  \leq l \max_{k=1\cdots N}\mathcal{E}\left(\Exp \left[ \sup_{-\tau\leq t \leq T} \left\vert X^{k,\A_N}_t-\bar{X}^{k}_t\right\vert^2 \right]\right)\\
		\end{align*}
		which tends to zero as $N$ goes to infinity, hence the law of $(X^{i_1,\A_N}_t, \cdots, X^{i_l,\A_N}_t, -\tau\leq t \leq T)$  converges towards that of 
		$(\bar{X}^{i_1}_t, \cdots, \bar{X}^{i_l}_t, -\tau\leq t \leq T)$ which is equal by
		 definition to $m(t,r_{p(i_1)})\otimes \cdots \otimes m(t,r_{p(i_l)})$.
	\end{proof}
\section{Discussion}
The dynamics of neuronal networks in the brain lead us to analyze a class of spatially extended networks which display multiscale connectivity patterns that are singular in at least two aspects:
\begin{itemize}
	\item the network display local dense connectivity patterns in which neurons are connected to their $v(N)$-nearest neighbors, where $v(N)=o(N)$. 
	\item the macro-circuit was also singular, in the sense that the probability of two neurons $i$ and $j$ to be connect tends towards zero. This is very far from usual mean-field models that consider full connectivity patterns, or partial connectivity patterns proportional to the network size~\cite{stannat-lucon:13,touboul-hermann-faugeras:11,graham2009interacting}. In these cases, the convergence is substantially slower, and the rescaling actually required thorough controls on the number of incoming connections to each neurons.  
\end{itemize}
{The introduction of local microcircuits with negligible size was suggested in~\cite{stannat-lucon:13}, in the context of the fluctuations induced by the microcircuit. Our scaling, motivated by characterizing the macroscopic activity at the scale of the neural field $\Gamma$, lead us to consider local microcircuits with spatial extension of order $v(N)/N$, which tends to zero in the limit $N\to\infty$. At this scale, the fluctuations related to the microcircuit vanish, which allowed identifying the large $N$ limit process. However, at the scale of one neuron (or considering, similarly to~\cite{stannat-lucon:13}, a field of size $N$, i.e. typical distances between neurons of order $1$), the microcircuits may induce more complex phenomena in which fluctuations become prominent. This interesting problem remains largely open and cannot be addressed with the techniques presented in the manuscript.}

The developments presented in this article also go way beyond what was done in the domain of mean-field analysis of large spatially extended systems. In that domain, probably the two most relevant contributions to date are~\cite{stannat-lucon:13} and~\cite{touboulNeuralfields:11}. In~\cite{touboulNeuralfields:11}, a relatively sketchy model of neural field was proposed, in which the system was fully connected and neurons gathered at discrete space location that eventually filled the neural field. The model presented here is considerably more relevant from the biological viewpoint, and necessitated to deeply modify the proofs proposed in that manuscript. In particular, the connectivity patterns are now randomized, and the proof is now made independent of results arising in finite-populations networks. Moreover, the two main contributions of the article, namely the singular coupling, was absent of the above cited manuscript. Such coupling was discussed in~\cite{stannat-lucon:13}, where the authors consider the case of network with nearest-neighbors topology (only a local micro-circuit) in which neurons connect to a non-trivial proportion of neurons $P=cN$. There is a substantial difficulty in considering only very local micro-circuits connectivity and sparse macro-circuits. Here, we solved this problem and framed it in a more general setting with multiscale coupling. 

The proof presented in the present manuscript is relatively general. In particular, it can be extended to models with non locally Lipchitz continuous dynamics (as is the case of the classical Fitzhugh-Nagumo model~\cite{fitzhugh:55}) as was presented in~\cite{touboulNeuralfields:11}, or to networks with multiple layers. The results enjoy a relatively broad universality. Indeed, we observe that the limit obtained is independent of the choice of the size of the micro-circuit and sparsity of the macro-circuit (as long as proper scaling is considered). This property shows that the limit is universal: for any choice of function $v(N)$ and $\beta(N)$, the macroscopic limit of our networks are identical. An interesting question is then what would be an optimal choice of functions $v(N)$ and $\beta(N)$ so that the convergence is the fastest. The speed of convergence towards the mean-field equation is hence governed by three quantities: 
			\begin{itemize}
				\item the term $\left(\frac{v(N)}{N}\right)^{\frac 1 {2d}}$ controls the regularity of the law solution of the mean-field equation with respect to space. The larger $v(N)$, the wider the micro-circuit, and therefore the slowest the local convergence. 
				\item  the term $\frac 1 {\sqrt{v(N)}}$ controls the speed of averaging at the micro-circuit scale, which decreases with the size of the micro-circuit $v(N)$. 
				\item The term $\frac 1 {\sqrt{N\beta(N)}}$ controls the speed of the averaging at the macro-circuit scale. This term is of course the smallest when $\beta(N)$ is large. In the biological system under consideration, there is nevertheless an energetic cost to increasing the connectivity level. 
			\end{itemize}
			The two first term corresponding to the micro-circuit convergence properties can give an information on the order of the optimal micro-circuit size. Minima are obtained when $v(N)$ is of order $N^{1/(d+1)}$, e.g. $\sqrt{N}$ in dimension $1$. Other choices may be analyze to optimize other criteria such that information capacity vs energetic considerations, anatomical constraints, size of clusters sharing resources, \ldots. 
			
			Eventually, this result has also implications in neuroscience modeling. In this domain, authors widely use the so-called Wilson-Cowan neural field model (see~\cite{bressloff:12} for a review). This model is given by non-local differential equations of type:
			\[\partial_t u(r,t)=-u(r,t)+\int_{\Gamma} J(r,r') S(u(r',t-\tau(r,r')))\,dr'\]
			where $u$ represents the mean firing-rate of neurons and $S$ corresponds to a sigmoidal function. This type of equations is similar to those obtained in the analysis of fully connected neural fields, as shown in~\cite{touboulNeuralfields:11,touboulNeuralFieldsDynamics:11}, when considering a discrete Wilson-Cowan type of dynamics for the underlying network, i.e. a case where $f(r,t,x)=-x/\theta(r)+I(r,t)$ and $b(x,y)=S(y)$ for $S$ a smooth sigmoidal function. In this case, we showed~\cite{touboulNeuralFieldsDynamics:11} that the solutions were attracted by Gaussian spatially chaotic processes with mean $\mu(r,t)$ and standard deviation $v(r,t)$ satisfying the integro-differential equations:
			\[\begin{cases}
				\partial_t\mu(r,t)=-\frac{\mu(r,t)}{\theta(r)}+I(r,t)+\int_{\Gamma} J(r,r')F(\mu(r',t-\tau(r,r')),v(r',t-\tau(r,r')))\,dr'\\
				\partial_t v(r,t)=-2\frac{v(r,t)}{\theta(r)}+\sigma(r)^2
			\end{cases}\]
			where $F(x,y)=\int_{\R} S(u)e^{-(u-x)^2}{2y}du/\sqrt{2\pi y}$. These are compatible with the neural field equations. However, these actually appear to overlook the complex connectivity pattern, and in particular neglect the additional local averaging term that we found here using rigorous probabilistic methods. Taking into account local microcircuitry would actually yield an additional term in the equation on $\mu(r,t)$:
			\[\partial_t\mu(r,t)=-\frac{\mu(r,t)}{\theta(r)}+I(r,t)+\bar{J}F(\mu(r,t-\tau_s),v(r,t-\tau_s))+\int_{\Gamma} J(r,r')F(\mu(r',t-\tau(r,r')),v(r',t-\tau(r,r')))\,dr'.\]
The study of these new equations will, with no doubt, present substantial different dynamics, are offer a new neural field model well worth analyzing in order to understand the qualitative role of local microcircuits on the dynamics. 

\medskip
{\bf Acknowledgement:} The author deeply acknowledges the help of anonymous reviewers for their important remarks on the manuscript. 
\bibliographystyle{apt}
{ \bibliography{perso,neuromathcomp,odyssee,PhysRev}}
\end{document}